\newcommand{\qedwhite}{\hfill \ensuremath{\Box}}
\newcommand{\labeltitle}[1]{\vskip 0.03in \noindent\textbf{#1}} 
\newcommand{\labelsubtitle}[1]{\vskip 0.03in \noindent\emph{#1}.}
\newcommand{\eg}{e.g.,\ }
\newcommand{\ie}{i.e.,\ }
\newcommand{\Bool}{\mathsf{B}}
\newcommand{\BExp}{\mathsf{BExp}}
\newcommand{\Exp}{\mathsf{Exp}}
\newcommand{\mathvar}[1]{\mathord{\mathit{#1}}}
\newcommand{\type}{\mathvar{type}}
\newcommand{\pchar}{\mathvar{char}}
\newcommand{\PC}{\mathvar{PC}}
\newcommand{\CPT}{\mathvar{CPT}}
\newcommand{\EL}{\mathvar{EL}}
\newcommand{\iC}{\mathvar{inContr}}
\newcommand{\oC}{\mathvar{outContr}}
\newcommand{\contr}{\mathvar{contr}}
\newcommand{\tdbnet}{timed DB-net\xspace}
\newcommand{\tdbnets}{timed DB-nets\xspace}
\newcommand{\match}{\mathsf{match}}
\newcommand{\qedblack}{\hfill \ensuremath{\blacksquare}}
\newmdtheoremenv{insight}{Conclusions}
\theoremstyle{definition}
\newtheorem{example}{Example}
\newtheorem{definition}{Definition}
\newtheorem{theorem}{Theorem}
\newtheorem{lemma}[theorem]{Lemma}
\definecolor{amber}{rgb}{1.0,0.75,0.0}
\tikzstyle{vertex}=[circle, draw, inner sep=0pt, minimum size=6pt]
\crefname{section}{Sect.}{Sects.}
\Crefname{section}{Section}{Sections}
\crefname{figure}{Fig.}{Figs.}
\Crefname{figure}{Figure}{Figures}
\crefname{table}{Tab.}{Tabs.}
\Crefname{table}{Table}{Tables}
\crefname{lstlisting}{List.}{Lists.}
\Crefname{lstlisting}{Listing}{Listings}
\crefname{example}{Ex.}{Exs.}
\Crefname{example}{Example}{Examples}
\crefname{theorem}{Lem.}{Lemmas}
\Crefname{theorem}{Lemma}{Lemmas}
\mathchardef\UrlBreakPenalty=10000
\begin{document}

\begin{frontmatter}

\title{Cost-aware Process Modeling in Multiclouds}

\author[add1]{Daniel Ritter}{\corref{mycorrespondingauthor}}
\ead{daniel.ritter@sap.com}

\address[add1]{SAP, Cloud Platform, Walldorf, Germany}

\begin{abstract}
Integration as a service (INTaaS) is the centrepiece of current corporate, cloud and device integration processes.
Thereby, processes of integration patterns denote the required integration logic as integration processes, currently running in single-clouds.
While multicloud settings gain importance, their promised freedom of selecting the best option for a specific problem is currently not realized as well as security constraints are handled in a cost-intensive manner for the INTaaS vendors (\eg by isolating tenants on container level), leading to security vs. costs goal conflicts, and intransparent to the process modeler.

In this work, we propose a design-time placement for processes in multiclouds that is cost-optimal for INTaaS problem sizes, and respects configurable security constraints of their customers.
To make the solution tractable for larger, productive INTaaS processes, it is relaxed by using a local search heuristic, and complemented by correctness-preserving model decomposition.
This allows for a novel perspective on cost-aware process modeling from a process modeler's perspective.

The multicloud process placement is evaluated on real-world integration processes with respect to cost- and runtime-efficiency, resulting in interesting trade-offs.
The process modeler's perspective is investigated based on a new cost-aware modeling process, featuring the interaction between the user and the INTaaS vendor through ad-hoc multicloud cost calculation and correctness-preserving, process cost reduction proposals.
\end{abstract}

\begin{keyword}
Application integration, cost-aware modeling, integration processes, multicloud, security
\end{keyword}
\end{frontmatter}

\section{Introduction}
%

Recent trends like cloud, mobile, serverless computing result in a growing number of communication partners as well as more complex integration processes that have to be operated by Integration as a Service (INTaaS) systems~\cite{Ritter201736}.
The integration processes essentially denote compositions of integration patterns 
\cite{Ritter201736,DBLP:conf/debs/0001MFR18,hohpe2004enterprise}, connecting the different communication endpoints, which are developed by users / customers (called tenants) and operated by INTaaS vendors.
While INTaaS vendors, like SAP Cloud Platform Integration (CPI)~\cite{sap-hci-content}, should strive for cost-optimal, multicloud platform deployments (\ie one process potentially runs across multiple platforms, \eg Amazon Web Services (AWS), Microsoft Azure)~\cite{buyya2010intercloud}, the customers or tenants impose strict security (\eg \emph{EU Data Protection Regulation}\footnote{EU --- GDPR, visited 2/2021: \url{http://goo.gl/Ru0slz}.}: non-shareable process parts)~\cite{singhal2013collaboration} and contractual requirements on these vendors (\eg exemption of one platform like \enquote{not on AWS}).
This results in goal-conflicts between the diverse requirements of the INTaaS customers and the costs of the vendors operating their (private or public) solutions on different cloud platforms (cf. \cref{ex:motivation}).
\begin{example} \label{ex:motivation}
	\Cref{tab:example_capacity} shows three tenants $\{t_1,t_2,t_3\}$, representing customers of an INTaaS, with processes (\texttt{$pc_{t_1,1}$}, \texttt{$pc_{t_1,2}$}, \texttt{$pc_{t_1,3}$}, \texttt{$pc_{t_2,1}$}, \texttt{$pc_{t_3,1}$}) that have a security
	constraint called \textsc{SH} (\ie shareable: not violating the confidentiality of processes of other tenants) and have required resource capacities (\textsc{CAP}).
	Shareable integration processes (marked with \faThumbsOUp) are free of side-effects and could be operated in a shared, multi-tenant way.
	In contrast, the non-shareable processes (marked with \faThumbsDown) can have side-effects, and thus need to be isolated.
	
	A classical hosting approach shown in~\cref{fig:hosting} leaves the customers' processes on a dedicated, private environment ($c_1$,..,$c_3$), fulfilling the security requirements (non-shareable processes marked with \textsc{N}), but imposing high costs on the INTaaS vendors (between $60.00$--$90.00$ EUR/mo\footnote{Considering example platform vendor variants $1$ and $N$, $\{cpu=1, ram=6.25GB, cost/mo=30$ EUR$\}$ and $\{cpu=1, ram=3.125GB, cost/mo=20$ EUR$\}$, respectively.}).
	A security-compliant but cost-efficient alternative ($50$ EUR/mo) would be a placement of processes $\{pc_{t_1,1}, pc_{t_2,1}, pc_{t_3,1}\}$ to a container $c_1$ of vendor \textsc{N} (not shown; cf. \cref{sub:motivating_example} for detailed evaluation and discussion) and $\{pc_{t_1,2}, pc_{t_1,3}\}$ to container $c_1$ of vendor $1$.
	\begin{figure}
\begin{minipage}[tb]{0.35\columnwidth}
		\small
		\captionof{table}{Processes (per tenant)}
		\label{tab:example_capacity}
		\begin{tabular}{p{0.7cm} p{0.5cm} p{0.5cm}}
			\hline
			Tenant & CAP & SH \\ \hline
			\multirow{3}{*}{$t_1$} & 13 & \faThumbsOUp \\
			& 19 & \faThumbsDown \\ 
			& 50 & \faThumbsDown \\
			\hline
			$t_2$ & 11 & \faThumbsOUp  \\ 
            \hline
			$t_3$ & 21 & \faThumbsOUp  \\ 
		\end{tabular}	
	\end{minipage}
	\begin{minipage}[tb]{0.6\columnwidth}
		\centering
		\includegraphics[width=1\columnwidth]{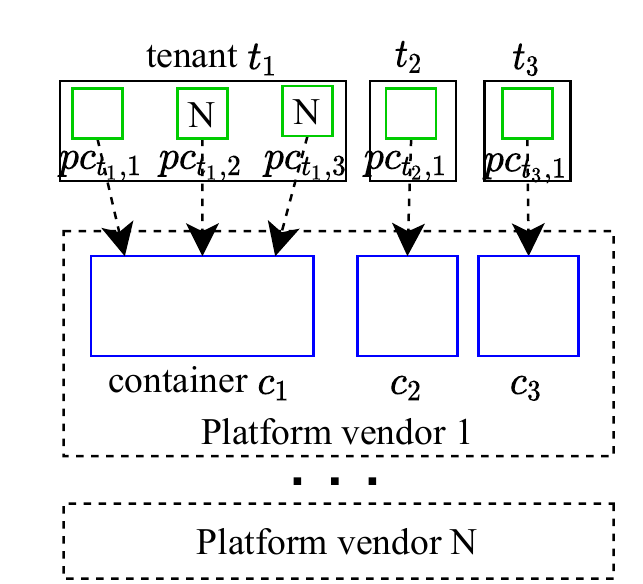}
		\captionof{figure}{Tenant-specific hosting}\label{fig:hosting}
	\end{minipage}
	\end{figure}
	\qedblack
\end{example}

We argue that for the problem of cost-efficient placement under constraints and customer preferences from the INTaaS perspective (cost-efficient process placement, short CEPP) \cite{DBLP:conf/edoc/000120}, multiclouds\footnote{Gartner predicted multicloud as common strategy for $70\%$ of enterprises by 2019, visited 2/2021: \url{https://gtnr.it/2K6yuhV}.} allow for more efficient solutions compared to the classical hosting or single-cloud approaches.
We claim that a multicloud placement of shareable sub-processes is profitable for the INTaaS vendor and can be reached without neglecting contractual obligations like security. 
At the same time, the INTaaS' customers seek for more insight into how their process modeling and configuration decisions impact their costs.
Having a complementary perspective, a user / process modeler has an interest in controlling the operational costs of the process models, besides the inherent complexity of modeling functionally correct process models.
For example, many countries, such as Italy, follow standardised procedures towards electronic invoicing (cf. \cref{ex:invoicing}) \cite{sap-hci-content}.
In particular, in Italy, the administrations accept electronic Business to Business and Business to Government invoices via FatturaPA\footnote{Fatturazione Elettronica verso la Pubblica Amministrazione (FatturaPA), visited 2/2021: \url{https://www.fatturapa.gov.it/it/index.html}}, which have to be sent through an interchange system called \emph{Sistema di Interscambio} (SdI) that records the data transfer.

\begin{example}
\label{ex:invoicing}
\begin{figure*}[bt]
	\centering
	\includegraphics[width=0.8\linewidth]{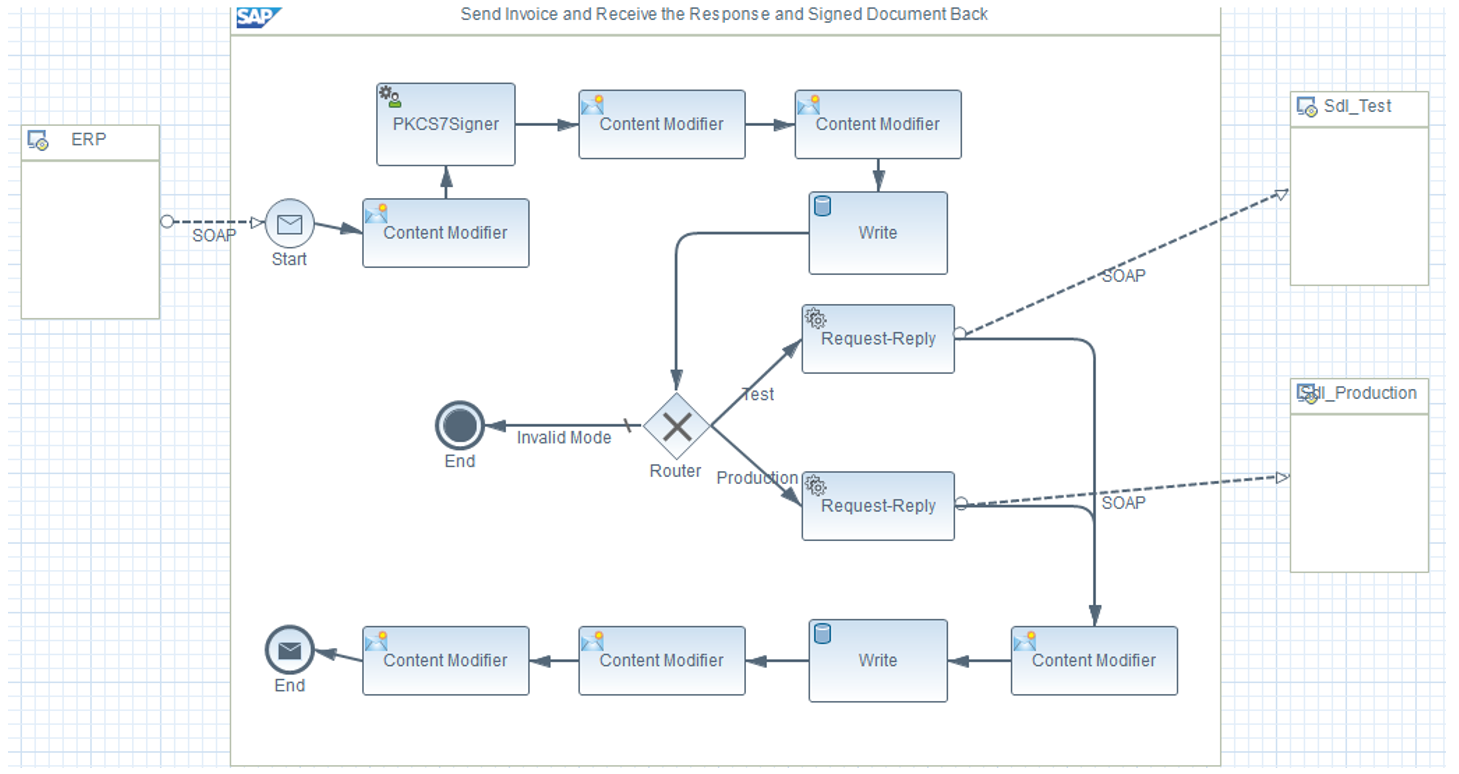}
	\caption{SAP CPI eDocuments Italy invoicing integration process (simplified from \cite{sap-hci-content})}
	\label{fig:cdd}
\end{figure*}
\Cref{fig:cdd} shows a simplified version of an invoicing integration process for Italy.
An ERP system sends invoices to the integration process, which \textbf{{\colorbox{amber!70}1}} preserves information like the transfer \textsl{mode}, as well as relevant identifiers like \textsl{IdentityCode} and \textsl{IdSender}, and then signs the invoice itself. 
In \textbf{{\colorbox{amber!70}2}} an SdI compliant message is prepared (\eg field \textsl{FiscalCode\_Client} mapped to \textsl{FiscalCode}) and the signed invoice is cached, before storing it for further reference. 
Message headers required by SdI are set (\eg \textsl{identityCode}) and \textbf{{\colorbox{amber!70}3}} either sent to a test or production endpoint, else discarded, depending on its mode.
The response message is enriched with information preserved from the original message (\eg signed invoice) \textbf{{\colorbox{amber!70}4}}, stored for reference and then \textbf{{\colorbox{amber!70}5}} a compliant response message for the ERP system is prepared (\eg \textsl{SignedDocument}, \textsl{DataTimeReception}). \qedblack 
\end{example}

While from an INTaaS customer's perspective, leveraging reduced costs of a multicloud offering could be beneficial, modeling integration processes such as electronic invoicing poses two important challenges to the user.
First, it is essential to model and configure the corresponding process functionally correct, \eg by generating valid invoices and avoiding penalties.
Second, the INTaaS' cost considerations and potentially their solution of the CEPP are intransparent to the user.
Consequently, the user is not aware of functionally correct, but potentially more cost-efficient process model configurations, and thus cannot answer important questions like \enquote{Is the process model cost-efficient?}, \enquote{What is the cost contribution of a dedicated process model?}, and \enquote{How to make the process more cost-efficient?}.

To overcome the challenges of the INTaaS vendor regarding costs and customer preferences like security, as well as the limited transparency of the INTaaS customer into the cost reduction potential of their process models, the solution of \emph{cost-efficient placement} has to be combined with the possibility of the customer and INTaaS vendor to interact, called \emph{cost-aware modeling}.
Therefore we adopt a \emph{cost-aware modeling process} that covers the following \emph{objectives}:
\begin{enumerate*}[label=(\roman*)]
    \item automatic placement decision,
    \item functional correctness-preserving process decomposition,
	\item considering multicloud platform vendors,
	\item security constraints,
	\item applicability to larger problem sizes (\ie performance),
	\item and cost improving process change proposals.
\end{enumerate*}
%
\begin{figure}[bt]
	\centering
	\includegraphics[width=\columnwidth]{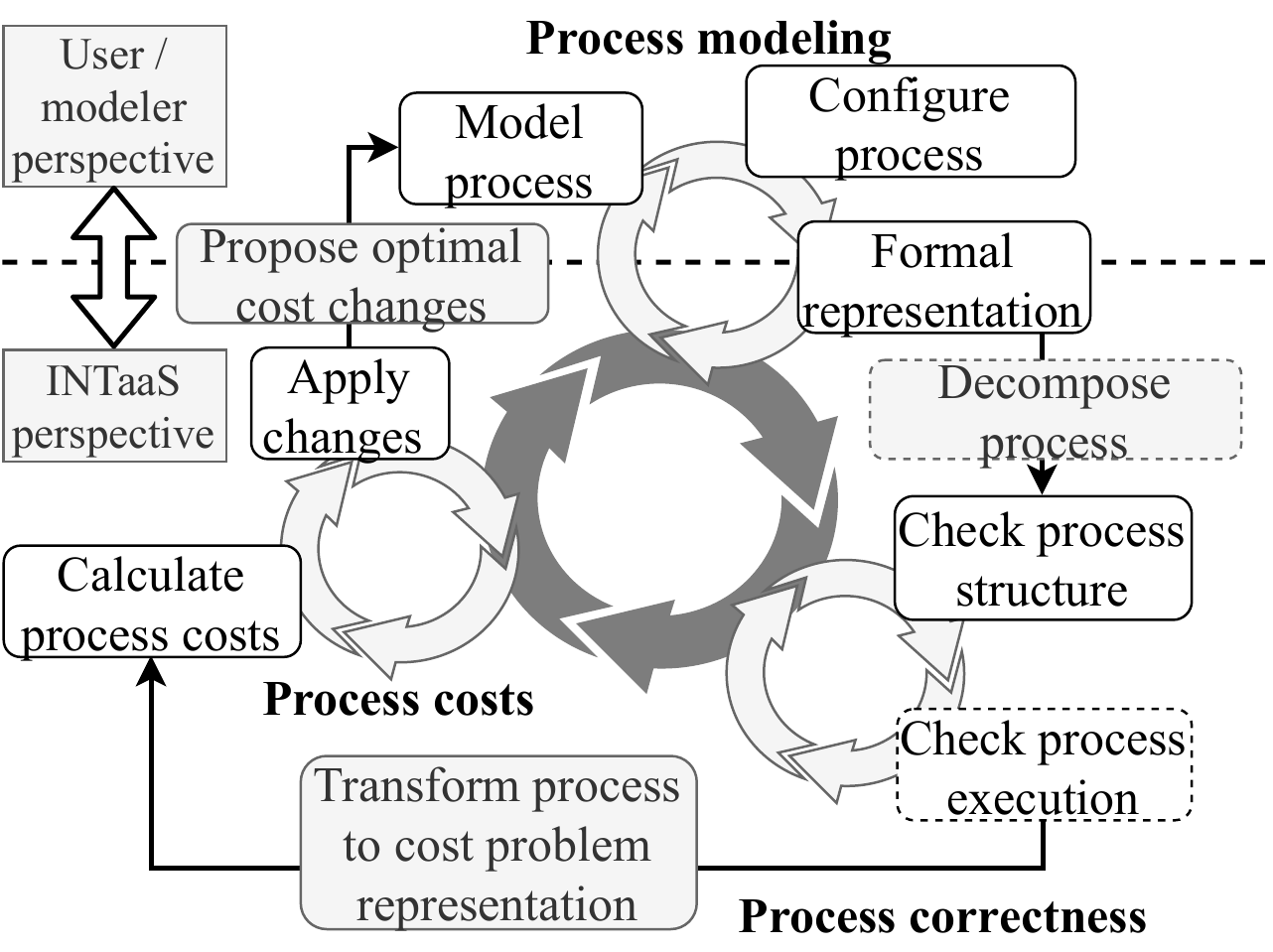}
	\caption{Cost-aware modeling process (white colored boxes denote the main steps, grey colored boxes represent explanatory steps or bridges between the main stages as well as the INTaaS and User / modeler perspectives, and dotted line boxes specify optional steps)}
	\label{fig:cost_aware_modeling_process}
\end{figure}
\Cref{fig:cost_aware_modeling_process} shows this process with its three main stages that we subsequently discuss: process modeling, process correctness, process costs (each stage is composed of process steps or actions).
\begin{itemize}
	\item[\emph{Modeling}] The integration process is modeled and configured in a process modeling tool, from which the model is then translated into a formal representation.
	With a thorough understanding of the security properties of a process, the formalized process model could be decomposed into smaller, but semantically equivalent process models for a more cost-efficient placement.
	\item[\emph{Correctness}] 
	The resulting (decomposed) formal process model is structurally and semantically analyzed and verified with respect to correctness (\ie simulation, model checking).
	For structural correctness, the formal representation together with the knowledge about the characteristics of composed patterns are sufficient.
	For the semantic correctness, a process is analyzed in the context of its dedicated, scenario-specific configuration, a test design, which specifies the desired properties with a notion of bi-simulation or assessment techniques like the expected output of a process simulation, for a given input.
	Any flaws found during this step result in another round of modeling or configuration adjustment.
	\item[\emph{Costs}] The correct integration process model is transformed into a cost problem representation, from which its costs for an INTaaS can be calculated optimally or through application of dedicated heuristics in an existing multicloud setup.
	This requires additional information (\eg about all currently operated processes by the INTaaS for all their customers) as well as their required resources and the PaaS vendors' container costs and capacities.
	These costs can be broken down either to an INTaaS customer or its end user, to find a more cost-efficient process model and
	lower cost process models can be proposed to the process modeler / integration developer as process improvements.
\end{itemize}

We argue that existing approaches do not fully support a cost-efficient placement, while allowing for a cost-aware modeling and hence the following research questions are formulated to guide the design and development of a cost-aware modeling process living up to objectives (i)--(vi):
\begin{itemize}
    \itemsep0em
    \item[Q1] When is a multicloud deployment beneficial for an INTaaS vendor? (cf. cost drivers; security constraints: (iii)--(iv)) 
    \item[Q2] How to practically determine a cost-efficient process placement in multiclouds? (cf. automatic; correctness-preserving process decomposition; large problem domains: (i)--(ii), (v))
    \item[Q3] How can processes be modeled cost-efficiently by a user for multiclouds? (cf. cost-aware modeling; vendor, user interaction (vi))
\end{itemize}
%
%
%
The conference paper \cite{DBLP:conf/edoc/000120} extended in this work has provided the foundations for Q1 (and partially Q2).
An optimal solution of CEPP has been formalized as mixed-integer linear program ($\rightarrow$ Q1) and a local search heuristic based on hill-climbing has been specified to make the solution practically applicable for real-world problem sizes ($\rightarrow$ Q2).
The evaluation showed that a multicloud placement and a decomposition of integration processes into smaller sub-processes are mostly beneficial for an INTaaS vendor in terms of cost-savings ($\rightarrow$ Q2).
However, decompositions were not fully specified, leaving the objective of \enquote{functional correctness} (cf. (ii)) open.

In this work, we add a correctness-preserving formal definition of process decompositions (\ie process remains functionally correct after decomposition) and cost-efficient process improvements ($\rightarrow$ Q2, Q3).
The specification of correctness-preserving integration process decompositions and process improvements is based on prior work on \emph{structural correctness} based on integration pattern contract graphs (IPCGs) \cite{DBLP:conf/debs/0001MFR18}.
However, a combined study with \emph{semantic correctness} \cite{DBLP:conf/edoc/0001RMRS18,ritter2019formal,DBLP:phd/at/Ritter2019} is out of scope and left for future work. 
For the first time, a combination of a practically tractable solution of CEPP (incl. process decomposition) and cost-reducing process improvements, enables the study of \emph{cost-aware process modeling} ($\rightarrow$ Q3).

The contributions to research questions Qx are elaborated following the principles of the design science research \emph{methodology} described in \cite{peffers2007design}: \enquote{\emph{Activity 1: Problem identification and motivation}} is based on literature and assessment of vendor driven solutions (\eg \cite{Ritter201736}) as well as quantitative analyses of security aspects of EAI building blocks (\ie existing catalogs of 166 integration patterns) and process improvements (\eg from \cite{DBLP:conf/debs/0001MFR18}).
\enquote{\emph{Activity 2: Define the objectives for a solution}} is addressed by formulating objectives (i)--(vi).
For \enquote{\emph{Activity 3: Design and development}} several artifacts are created to answer questions Q1--Q3, and meet objectives (i)--(vi), including (a) the specification of an extensible modeling framework for CEPP that can include new constraints, and comprises a benchmark for heuristic solutions, (b) a local search heuristic with a novel adaptation of bin-backing to CEPP for finding an initial solution, and four CEPP transformations: \emph{move}, \emph{shrink}, \emph{reduce}, and \emph{swap}, making the evaluation tractable and showing results on real-world data sets from SAP CPI \cite{sap-hci-content}, (c) a formal approach to \emph{correctness-preserving} process decomposition and formalization of two new process improvements (\ie \emph{combine patterns}, \emph{router to routing slip}), and (d) the definition and realization of a prototype, which we experimentally evaluate, and assess the potential of cost-aware process modeling in a case study.

The paper is structured as follows.
In \cref{sec:preliminaries} we analyze \emph{shareability} as operational security requirement of integration patterns and accordingly extend an existing process formalism from~\cite{DBLP:conf/debs/0001MFR18}, rate process improvements according to their cost-reduction potential, and state the CEPP problem for multiclouds for these processes. 
We formally define decomposition and improvement change primitives and discuss their correctness in \cref{sec:decomposition}.
We state CEPP as an optimization problem in \cref{sec:milp}, and specify a heuristic tailored to CEPP in \cref{sec:heuristic}.
In \cref{sec:evaluation}, we define and realize a prototype, which we experimentally evaluate on real-world processes and use to study cost-aware process modeling.
We discuss related work in \cref{sec:relatedwork} and conclude in \cref{sec:conclusion}.

\section{Shareable processes, cost improvements and multicloud model} \label{sec:preliminaries}
In this section we analyze integration patterns with respect to security constraint of side-effects (cf. objective (iv)) as well as specify the cost reduction potential of common optimization strategies as part of the cost-aware modeling process (cf. objective (vi)).
We briefly introduce the formal model for processes from \cite{DBLP:conf/debs/0001MFR18}, make the required extensions, and sketch the used multicloud model.

\subsection{Shareability and cost improvement analyses}

We conduct a pattern and process analysis regarding the security requirement of shareability, before we briefly summarize process cost improvements with a focus on two improvements required for our case study in \cref{sec:evaluation}.

\subsubsection{Shareability}
\label{sub:analysis}
The main security requirement that we target in this work is \emph{shareability}~\cite{alzain2012cloud} (\ie shareable or non-shareable patterns and processes), which denotes side-effects to another tenant.
We define integration patterns as free of side-effects from a security point of view (\ie shareable), if they do not allow for user defined functions (UDFs), other than the conditions and expressions offered by the INTaaS vendors, and do not access shared resources (\eg database).
The UDFs are critical in this regard, if they allow for programs that have side-effects on a shared container.
We assume that a process is \emph{shareable}, if all its patterns are free of side-effects.
Since shareability highly depends on the specific implementation of the pattern by the INTaaS vendor (\eg side-effect free transformation language vs. UDF), we base our analysis on the SAP CPI process content \cite{sap-hci-content}.

When classifying the $166$ integration patterns from the catalogs~\cite{Ritter201736,hohpe2004enterprise,DBLP:journals/ijcis/RitterS16} by a manual analysis of the pattern descriptions, there are $31$ patterns that are potentially non-shareable (\eg script, resequencer).
The specific implementations of the SAP CPI patterns reduces these to two (\eg script and message mapping), which are used in 48.59\% and 63.32\% of the processes for scripts and mappings, respectively.
The small number of patterns with side-effects is not surprising, since INTaaS vendors try to avoid them, but the high usage rate in CPI emphasizes their relevance for this provider, and thus making CEPP solutions desirable for CPI. 

\subsubsection{Cost-aware modeling improvements}
For the cost-aware modeling process in \cref{fig:cost_aware_modeling_process}, we consider automated changes and improvements as crucial.
After having calculated the process costs, the steps \enquote{Apply changes} and \enquote{Propose optimal cost changes} are relevant for the interaction with the modeler.
If the latter accepts the change proposals showing their cost reduction potentials, she can continue with the changed process model, conduct custom changes and get new change proposals, or stick to the current process model.
\begin{table}[tb]
	\centering
	\small
	\caption{Optimization strategies from \cite{DBLP:conf/debs/0001MFR18} in context of their cost reduction potential}
	\label{tab:optimization_strategies}
	\begin{tabular}{ll|c}
		\hline
          Strategy & Optimization & \parbox[t]{1.5cm}{Cost reduction} \\
		\hline
		\multirow{2}{*}{\parbox[t]{2cm}{OS-1: Process \\Simplification}} & \parbox[t]{2.5cm}{Redundant Sub-process Removal} & \faThumbsUp\\
			& \parbox[t]{3cm}{Combine Sibling Patterns} & \faThumbsUp \\
			& \parbox[t]{3cm}{Unnecessary conditional fork} & \faThumbsUp\\
			& \parbox[t]{3cm}{\textbf{Combine neighbors \cite{DBLP:journals/is/BohmHPLW11,habib2013adapting}}} & \faThumbsUp\\
			& \parbox[t]{3cm}{\textbf{Routing to routing slip (\cite{DBLP:journals/is/BohmHPLW11,vrhovnik2007approach})}} & \faThumbsUp\\
		\hline
		\parbox[t]{2cm}{OS-2: Data Reduction} & \parbox[t]{3cm}{Early-Filter} & \faThumbsUp\\
		& \parbox[t]{3cm}{Early-Mapping} & \faThumbsOUp\\
		& \parbox[t]{3cm}{Early-Aggregation} & \faThumbsOUp\\
		& \parbox[t]{3cm}{Claim Check} & \faThumbsUp\\
		& \parbox[t]{3cm}{Early-Split} & \faThumbsODown\\
		\hline
		\parbox[t]{2cm}{OS-3: Parallelization} &
		\parbox[t]{3cm}{Sequence to parallel} & \faThumbsODown\\
		& \parbox[t]{3cm}{Merge parallel sub-processes} & (\faThumbsUp)\\
	\end{tabular}
\begin{tablenotes}
	\centering
	\small
	\item \faThumbsUp: improvement, \faThumbsOUp: slight improvement, \faThumbsODown: deterioration, \textvisiblespace: no effect
\end{tablenotes}
\end{table}

In our previous work \cite{DBLP:conf/debs/0001MFR18}, we systematically collected and characterised optimizations applicable to integration processes to which we refer for a detailed introduction to optimization strategies OS-1 to OS-3 and their optimizations.
In this work, we set the optimizations into context of their cost reduction potential in \cref{tab:optimization_strategies} and add two optimizations required for a case study in our evaluation (\ie \emph{Combine neighbors}, \emph{Routing to routing slip}), which we adapted from the literature of related domains, and briefly describe subsequently.

The combination of neighbor patterns can also be found in the data integration (\eg \cite{DBLP:journals/is/BohmHPLW11}) and scientific workflow domains (\eg \cite{habib2013adapting}) as \emph{operator merge} (discussed in more detail in \cref{sub:cost_rewriting_rules}).
However, in our case the operators are essentially the integration patterns, and a merge is only allowed, if the neighbour patterns are of the same type (\eg subsequent message translator or content enricher patterns from \cite{hohpe2004enterprise}) and have no conflicts accessing the message.
Moreover, the simplification of a content based router to a routing slip pattern (both from \cite{hohpe2004enterprise}) is remotely related to \emph{Unnecessary Switch-Path} \cite{DBLP:journals/is/BohmHPLW11} and \emph{Eliminate-Unused-Partner} \cite{vrhovnik2007approach}  (discussed in more detail in \cref{sub:cost_rewriting_rules}).
While the intent seems similar, the routing slip maintains routing configurations for \emph{all} switch-paths of directly connected endpoints, and thus transforms process model elements into configurations.

Both new optimizations can be classified as OS-1: \emph{Process Simplification}, which group together all techniques whose main goal is reducing model complexity (\eg number of patterns~\cite{sanchez2010prediction}).
The cost reduction of these techniques can be measured by the reduced number of pattern instances in a runtime system that leads to a lower resource consumption (\eg main memory, CPU utilization), which is a major cost-driver in current cloud platforms (\eg \cite{buyya2010intercloud}).

Another way of reducing the amount of required memory is OS-2: \emph{Data Reduction}.
If data are either filtered before it arrives at the process, or data are filtered early in the process (\emph{Early filter}), or data are stored on disk and picked up at the end of the process (\emph{Claim check}), then the process could be placed on a smaller, cheaper container variant\footnote{A container variant denotes a platform vendor-specific combination of resources (\eg number of CPU cores, main memory in MB that is available for a certain cost. In contrast, a container is a specific processing unit of a certain variant.}.
Notably, the effects of the other OS-2 optimizations are smaller (\eg \emph{Early-Mapping}) or even require more memory (\eg \emph{Early-Split} consumes more memory for newly created messages while not reducing the overall amount of payload data).

Finally, OS-3: \enquote{Parallelization} increases the costs through redundancy (\eg \emph{Sequence to parallel}), and optimizations like \emph{Merge parallel sub-processes} merely invert cost inefficient parallelizations.

In summary, especially process simplification and data reduction strategies (\ie OS-1, OS-2) seem to have a cost reduction potential.
The newly introduced \emph{combine neighbors} and \emph{routing to routing slip} optimizations are defined in \cref{sec:decomposition}.

\vspace{-.2cm} \subsection{Integration process model} 
\label{sub:compositions}

We briefly recall integration pattern contract graphs (IPCGs)~\cite{DBLP:conf/debs/0001MFR18}, denoting a formal integration process model, and extend them with a shareability requirement and resource constraints.
We decided to use IPCGs due to their comprehensive coverage of integration processes and inherent structural correctness verification properties \cite{DBLP:conf/debs/0001MFR18} addressing objective (ii).
While IPCGs can only be used to formally assess the structural correctness of processes as well as their changes like improvements / optimizations \cite{DBLP:conf/debs/0001MFR18}, we showed in previous work \cite{DBLP:phd/at/Ritter2019} that they can be lifted to \tdbnet \cite{DBLP:conf/edoc/0001RMRS18,ritter2019formal}, a formalism based on Coloured Petri Nets, allowing for a formal assessment of their semantic correctness (out of scope in this work).

\subsubsection{Integration pattern contract graphs}
\label{sub:ipcg}
An integration pattern typed graph (IPTG) \cite{DBLP:conf/debs/0001MFR18} (cf. \cref{def:ipg}) is a connected, directed acyclic graph, whose nodes denote integration patterns ($P$) and the edges stand for message channels ($E$).
%
\begin{definition}[IPTG~\cite{DBLP:conf/debs/0001MFR18}]
  \label{def:ipg}
  An \emph{integration pattern typed graph} (IPTG) \cite{DBLP:conf/debs/0001MFR18} is a directed graph  with set of nodes $P$ and set of edges $E \subseteq P\times P$, together with a function $\type : P \to T$, where $T = \{$start, end, message processor, fork, structural join, condition, merge, external call$\}$.
  The cardinality of incoming edges in a node $p \in P$ is denoted by $\bullet p$ and and $p \bullet$ represents the cardinality of the outgoing edges of $p$.
  The variable $n \in \mathbb{N}$ specifies an arbitrary value greater than one. 
  An IPTG $(P, E, \type)$ is \emph{correct} if
  \begin{compactitem}[$\bullet$]
    \itemsep0em
    \item $\exists$ $p_1$, $p_2$ $\in$ P with $\type(p_1)$ = start and $\type(p_2)$ = end;
    \item if $\type(p) \in$ \{fork, condition\} then $|\bullet p|=1$ and $|p \bullet| = n$, and if $\type(p) = join$ then $|\bullet p| = n$ and $|p \bullet| = 1$;
    \item if $\type(p) \in$ \{message processor, merge\} then $|\bullet p| = 1$ and $|p \bullet| = 1$;
    \item if $\type(p) \in$ \{external call\} then $|\bullet p| = 1$ and $|p \bullet| = 2$;
    \item The graph $(P, E)$ is connected and acyclic. \qedwhite
  \end{compactitem}
\end{definition}
The nodes are \emph{typed} by pattern types with characteristics (\eg channel and message cardinalities), as defined in \cref{def:characteristics}. 
\begin{definition}[Pattern characteristic and configuration~\cite{DBLP:conf/debs/0001MFR18}]
  \label{def:characteristics}
  A \emph{pattern characteristic} assignment for an
  IPTG $(P, E, \type)$ is a function
  $\pchar: P \to 2^{\PC}$, assigning to each pattern a subset of the set
  \begin{align*}
    \PC =\
    & (\{\text{MC}\}\times \mathbb{N}^2) \cup {} 
      (\{\text{ACC}\} \times \{\text{ro},\text{rw}\}) \cup {} \\
    &  (\{\text{MG}\} \times \Bool) \cup {} 
     (\{\text{CND}\} \times 2^{\BExp}) \cup {} 
     (\{\text{PRG}\} \times \Exp ) \enspace ,  
  \end{align*}
  where $\Bool$ is the set of Booleans, $\BExp$ the set of Boolean
  expressions, $\Exp$ the set of program expressions, and $\text{MC}$, $\text{CHG}$, $\text{MG}$, $\text{CND}$, $\text{PRG}$, where characteristic $(\text{MC}, n, k)$ represents a message cardinality of $n$:$k$ with $n, k \in \mathbb{N}$, $(\text{ACC}, x)$ the message access,
depending on if $x$ is read-only $\text{ro}$ or read-write $\text{rw}$, and the characteristic $(\text{MG}, y)$ represents whether the pattern is message generating depending on the Boolean $y$. 
A $(\text{CND}, \{c1, ..., cn\})$ represents the conditions $c_1$, \ldots, $c_n$ used by the pattern to route messages, and $(\text{PRG}, (p))$ represents a program $p$ used by the pattern. \qedwhite
\end{definition}
The (cross-) pattern data composition is represented by inbound and outbound pattern contracts (\ie $inContr$, $outContr$) as defined in \cref{def:pattern-contract}.
\begin{definition}[Pattern contract~\cite{DBLP:conf/debs/0001MFR18}]
  \label{def:pattern-contract}
  A \emph{pattern contract} assignment for an IPTG $(P, E, \type)$ is a function $\contr : P \to {\CPT} \times 2^{\EL}$, assigning to each pattern a function of type
  \[
    \CPT = {\{\text{signed}, \text{encrypted}, \text{encoded}\} \to \{\text{yes}, \text{no}, \text{any}\}}
  \]
  and a subset of the set
  \[
    \EL = (\{\text{HDR}\} \times 2^{D}) \cup (\{\text{PL}\} \times 2^{D}) \cup (\{\text{ATTCH}\} \times 2^{D})
  \]
  where $D$ is a set of  data elements (the concrete elements of $D$ are not important, and will vary with the application domain).
  The set $\CPT$ in a contract represents integration concepts, while
  the set $\EL$ represents data elements and the structure of the
  message: its headers $(\{\text{HDR}\}, H)$, its payload $(\{\text{PL}\}, Y)$ and its attachments $(\{\text{ATTCH}\}, A)$, where $H$ is a key-value pair, $Y$ is an arbitrary payload value, and $A$ is a key-binary value pair.
\qedwhite
\end{definition}
%
Each pattern will have an inbound and an outbound pattern contract (later illustrated in \cref{ex:iptg}),
describing the format of the data it is able to receive and send
respectively --- the role of pattern contracts is to make sure that adjacent inbound and outbound contracts match (cf. \cref{def:matching-contracts}).
\begin{definition}[\cite{DBLP:conf/debs/0001MFR18}]
\label{def:matching-contracts}
Let $(C, E) \in {\CPT} \times 2^{\EL}$ be a pattern contract, and $X \subseteq {\CPT} \times 2^{\EL}$ a set of pattern contracts. Write $X_{\CPT} = \{ C'\ | \ (\exists E')\ (C', E') \in X \}$ and $X_{\EL} = \{ E'\ | \ (\exists C')\ (C', E') \in X \}$.
We say that $(C, E)$ \emph{matches} $X$, in symbols $\match((C, E), X)$, if the following condition holds:
\begin{align*}
&(\forall x)\big(C(x) \neq \text{any} \implies \\ & \qquad\quad(\forall C' \in X_{\CPT})(C'(x) = C(x) \vee C'(x) = \text{any}) \big) \land {} \\
	& (\forall (m, Z) \in E)\big( Z = \bigcup_{(m, Z') \in \cup X_{\EL}}Z' \big)
\end{align*}\qedwhite
\end{definition}
This results in the definition of an integration pattern contract graph (IPCG) (cf. \cref{def:ipcg}).
%
\begin{definition}
  \label{def:ipcg}
  An \emph{integration pattern contract graph} (IPCG) is a tuple
  \[
    (P, E, \type, \pchar, \iC, \oC)
  \]
  where $(P, E, \type)$ is an IPTG, $\pchar : P \to 2^{\PC}$ is a pattern characteristics assignment, and $\iC : \prod_{p \in P}({\CPT} \times 2^{\EL})^{|\bullet p|}$ and $\oC : \prod_{p \in P}({\CPT} \times 2^{\EL})^{|p \bullet|}$ are pattern contract assignments --- one for each incoming and outgoing edge of the patttern, respectively --- called the inbound and outbound contract assignment respectively.
  It is \emph{correct}, if the underlying IPTG $(P, E, \type)$ is correct, and  inbound contracts matches the outbound contracts of the patterns' predecessors, i.e.\ if for every $p \in P$
\[
  p = \text{start} \vee \match(\iC(p), \{ \oC(p')\ |\ p' \in \bullet p\}) \enspace .
\]

Two IPCGs are \emph{isomorphic} if there is a bijective function between their patterns that preserves edges, types, characteristics and contracts. \qedwhite
\end{definition}
Two IPCGs are \emph{isomorphic} if there is a bijective function between their patterns that preserves edges, types, characteristics and contracts.
The syntactic correctness of an IPCG follows from \cref{def:matching-contracts}, which defines a structural $match$ function over the contracts for correctness-preserving re-writing of the graphs (cf. \cref{ex:contracts}).
For more details on IPCG contracts beyond this paper we refer to \cite{DBLP:conf/debs/0001MFR18}.
\begin{example}
    \label{ex:contracts}
    \Cref{fig:sample_ipcg_contract} shows an IPCG with two contracts (neglecting the other contracts for better readability), the outbound contract \texttt{outContr} of the shareable content-based router pattern \tikz{\node[circle,draw,red,scale=0.5] (a) at (0,0) {\textcolor{white}{N}}} (cf. \texttt{CBR}) and the subsequent inbound contract \texttt{inContr} of the non-shareable multicast pattern \tikz{\node[circle,draw,red,scale=0.5] (a) at (0,0) {\textcolor{black}{N}}} (cf. \texttt{MC}).
    \begin{figure}[bt]
    	\centering
    	\includegraphics[width=\columnwidth]{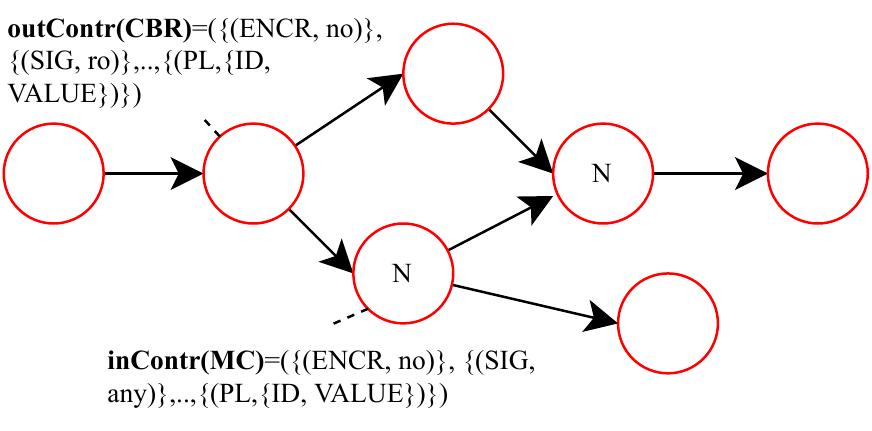}
    	\caption{Sample integration pattern contract graph (with two contracts)}
    	\label{fig:sample_ipcg_contract}
    \end{figure}
    The contracts depict properties indicating that messages passed from the router are neither encrypted (cf. \texttt{ENCR}) nor signed (cf. \texttt{SIG}) and have a payload \texttt{PL} of \texttt{ID} and \texttt{VALUE} data elements.
    The inbound contract of the multicast has a fitting inbound contract, which accepts only messages with unencrypted payload and the same data elements, and can handle signed or unsigned messages.
    In other words, the contracts match, and thus this part of the process is structurally correct.
    \qedblack
\end{example}

\labeltitle{Extensions (Shareability, Capacity)} The shareability requirement and resource constraints (as capacity) can be considered as additional pattern characteristics of the set of \textsl{pattern characteristics} of each pattern, formally defined as follows:
\begin{definition}[Extended pattern characteristic and configuration]
  \label{def:characteristics_extended}
  An extended \emph{pattern characteristic} assignment based on \cref{def:characteristics} for an
  IPTG $(P, E, \type)$ is a function
  $\pchar: P \to 2^{\PC}$, assigning to each pattern a subset of the set
  \begin{align*}
    \PC =\
    & (\{\text{MC}\}\times \mathbb{N}^2) \cup {} 
      (\{\text{ACC}\} \times \{\text{ro},\text{rw}\}) \cup {} \\
    &  (\{\text{MG}\} \times \Bool) \cup {} 
     (\{\text{CND}\} \times 2^{\BExp}) \cup {} 
     (\{\text{PRG}\} \times \Exp ) \cup {} \\
    & (\{\text{CAP}\}\times \mathbb{N}) \cup {} 
	(\{\text{SH}\} \times \Bool) {} \enspace ,
  \end{align*}
    where, in addition to \cref{def:characteristics}, capacity $\text{CAP}$ is a positive numeric value, indicating the amount of resources required by the pattern during runtime, and shareability indicator $\text{SH}$ denotes whether the pattern has security relevant side-effects. \qedwhite
\end{definition}
According to the analysis in~\cref{sub:analysis}, we add a Boolean characteristic that denotes whether the pattern is shareable or not (SH) and its required capacity (CAP) to the already existing characteristics (cf. \cref{ex:iptg}). 
While CAP is set automatically with runtime data (\eg used main memory, CPU), the valuation of SH is further discussed in~\cref{sec:evaluation}.
\begin{example}
\label{ex:iptg}
	\Cref{fig:sample_ipcg} denotes an integration pattern type graph (with characteristics, no contracts) that is essentially a composition of single patterns (\ie nodes) connected by message channels (\ie edges).
    \begin{figure}[bt]
    	\centering
    	\includegraphics[width=\columnwidth]{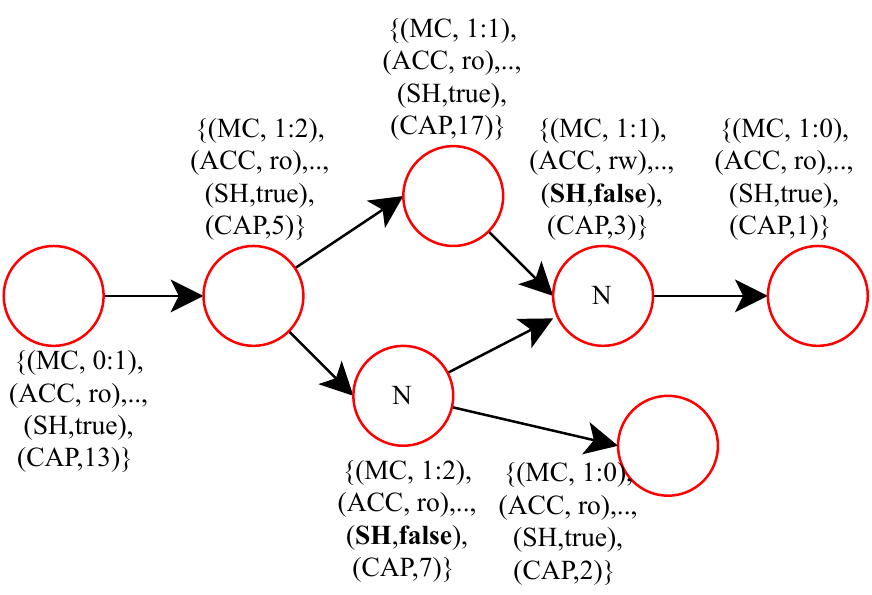}
    	\caption{Sample integration pattern type graph (with focus on the pattern characteristics)}
    	\label{fig:sample_ipcg}
    \end{figure}
	The nodes have characteristics like message cardinality \textsc{MC} and content change \textsc{CHG} among others as well as the new shareability \textsc{SH} and required capacity \textsc{CAP} characteristics.
	Briefly, the process starts with a pattern that creates a message (thus message cardinality of $0$:$1$) and forwards it to a read-only routing pattern with two outgoing message channels.
	Thereby, one of the successor patterns is shareable, while the other is assumed to have side-effects.
	Similarly, the join router implementation in this example has side effects, which is usually not the case \cite{Ritter201736}.
	The receiving patterns both have a $1$:$0$ cardinality. 
	The capacities denote resources required for these patterns during runtime (\eg main memory / RAM). \qedblack
\end{example}
The structural representation of an IPCG is close to the process models a user usually work with and subsequently serves as a basis to define our CPEE multicloud model as well as for decomposition and optimization definitions.

\subsection{Multicloud INTaaS model}
Similar to~\cite{buyya2010intercloud,singhal2013collaboration,alzain2012cloud} we consider the term \enquote{multicloud} to denote applications distributed over multiple cloud platforms.
Further --- according to the NIST specification~\cite{mell2011nist} --- we name vendors of these platforms \emph{platform-as-a-service providers}, which offer computing environments, called containers, with different capacities and prices as container costs that we specify as in \cref{def:multicloud}, allowing to identify the variants across vendors.
%
\begin{definition}[Multicloud cost configuration]
\label{def:multicloud}
    A tuple $(j, \omega_j, B_j, G_j)$ specifies a multicloud cost configuration with a set of computing environments $\omega_j$ of which $j \subseteq \omega_j$ denotes a specific container, a set of numberic container capacities $B_j$, and costs $G_j$. \qedwhite
\end{definition}
We consider processes based on IPCGs from~\cref{sub:compositions} (cf. \cref{def:process}), whose process model and content is developed by the customer (called tenant), \eg in an INTaaS vendor-specific integration language.
The process has a required capacity and a shareability property (\ie side-effects of process $k$ in the same container $j$ as process $i$) that is derived from the underlying IPCG.
\begin{definition}[Multicloud process]
\label{def:process}
    Tuple $(i,T_i,A_i,z_{ikj})$ represents a multicloud process, where $i$ denotes an IPCG (cf. \cref{def:ipcg}), a tenant assignment $T_i$, an assignment of the required capacity $A_i$ (\eg DRAM in MB) and the IPCG's shareability property $z_{ikj}$. \qedwhite
\end{definition}
The required capacity of a process $A_i$ denotes the sum of the capacities of all patterns in $i$.
We claim that a multicloud placement of processes is beneficial for an INTaaS vendor and argue that a decomposition (cf. \cref{ex:composition_partitioning}) could lead to an even more cost-efficient placement.
\begin{example}
\label{ex:composition_partitioning}
	In \cref{fig:composition_partitioning} the different conceptual entities are set into context by the example of the IPCG from~\cref{fig:sample_ipcg} for a given set of platforms from vendors $\omega$ and $\omega+1$.
    \begin{figure}[bt]
    	\centering
    	\includegraphics[width=\columnwidth]{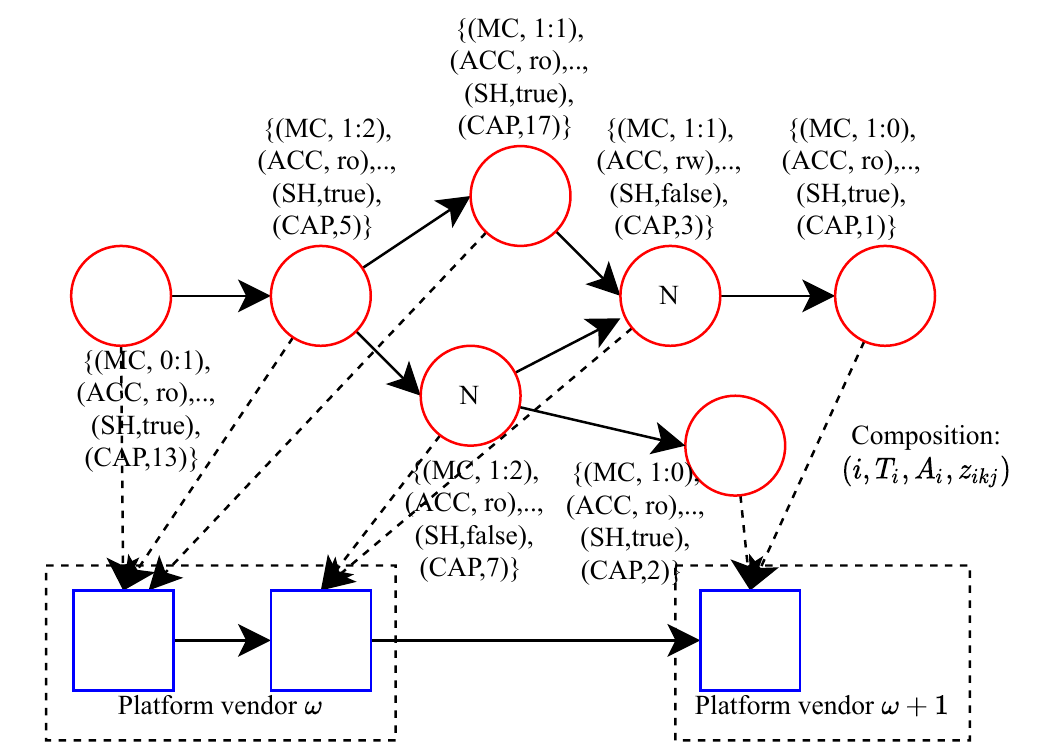}
    	\caption{Placement of the pattern (sub-) processes on platform variants}
    	\label{fig:composition_partitioning}
    \end{figure}
	The example assumes that the process can be decomposed across the two cloud vendors and that the decomposition makes sense from a cost perspective.
	Consequently both configurations of platform vendor take some of the pattern sub-processes (cf. dashed arrows).
	Vendor $\omega$ runs the non-shareable process parts isolated to fulfill the no information sharing constraint.
	The non-dashed lines denote message exchange, whose latency increases with a security-aware decomposition (\eg across multiple platform vendors). \qedblack
\end{example}
Generally, a decomposition of an IPCG is assumed to lead to multiple dependent IPCGs (calling each another), which is introduced and discussed in depth in \cref{sec:decomposition} and later evaluated and referred to as \emph{cutting} in \cref{sec:evaluation}.

\section{Process decomposition and improvements}
\label{sec:decomposition}
In this section we formally define the decomposition of processes and the two new improvements from \cref{sec:preliminaries} as changes of IPCGs based on \cite{DBLP:conf/debs/0001MFR18} and discuss their structural correctness-preserving properties (cf. objectives (ii), (vi)).

\subsection{Formal representation}
The decomposition or cutting of a problem into smaller problem sizes is a powerful model-based improvement or optimization technique (\eg \cite{korte2012combinatorial}).
Similar to \enquote{component decomposition} in Silva et al. \cite{DBLP:conf/europar/SilvaP17}, integration processes are decomposed into smaller sub-processes and then a solution for the CEPP of the sub-processes is calculated (cf. \cref{ex:composition_partitioning}).
This increases their number, but leads to smaller processes that might be placed more cost-efficiently.
However, the decomposition of a process must not impact the correctness of its combined sub-processes.
For that we first recall the background on changes to IPCGs using a rule-based graph rewriting system from \cite{DBLP:conf/debs/0001MFR18}, giving a formal framework in which different changes can be compared.
We begin by describing the graph rewriting framework, subsequently applying it to define decompositions and improvements.

\subsubsection{Background: Graph Rewriting}
Graph rewriting provides a visual framework for transforming graphs in a rule-based fashion.
A graph rewriting rule is given by two embeddings of graphs $L \hookleftarrow K \hookrightarrow R$, where $L$ represents the left hand side of the rewrite rule (called pattern graph), $R$ the right hand side (called replacement graph), and $K$ their intersection (the parts of the graph that should be preserved by the rule).
A rewrite rule can be applied to a graph $G$ after a match of $L$ in $G$ has been given as an embedding $L \hookrightarrow G$; this replaces the match of the pattern graph $L$ in $G$ by replacement graph $R$.
The application of a rule is potentially non-deterministic: several distinct matches can be possible~\cite{Ehrig:2006:FAG:1121741}.
Visually, we represent a rewrite rule by a left hand side and a right hand side graph colored green and red: green parts are shared and represent $K$, while the red parts are to be deleted in the left hand side, and inserted in the right hand side respectively.
For instance, the following rewrite rule moves the node $P_1$ past a fork by making a copy in each branch, changing its label from $c$ to $c'$ in the process in \cref{fig:example_rewriting}.
\begin{figure}[h!]
	\centering
	\includegraphics[scale=0.5]{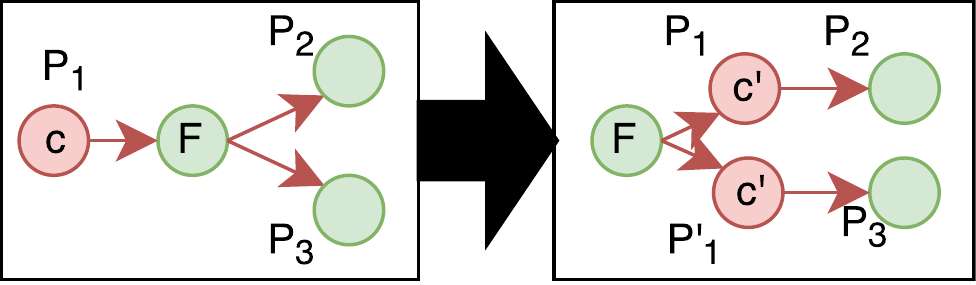}
	\caption{Process rewriting example}
	\label{fig:example_rewriting}
\end{figure}
%
%
%
Formally, the rewritten graph is constructed using a double-pushout (DPO)~\cite{ehrig1973graph} from category theory.
We use DPO rewriting since rule applications are side-effect free (\eg no \enquote{dangling} edges) and local (\ie all graph changes are described by the rules).
We additionally use Habel and Plump's relabeling DPO extension~\cite{habel2002relabelling} to facilitate the relabeling of nodes in partially labeled graphs.
\Cref{fig:sample_ipcg_contract,fig:sample_ipcg} show contracts and characteristics as comments, but in the rules that follow, we will represent them as (schematic) labels inside the nodes for space reasons.

In addition, we also consider rewrite rules parameterized by graphs, where we draw the parameter graph as a cloud (see \eg \cref{fig:nonshareable_rewriting_1} for an example).
A cloud represents any graph, sometimes with some side-conditions that are stated together with the rule.
When looking for a match in a given graph $G$, it is of course sufficient to instantiate clouds with subgraphs of $G$ --- this way, we can reduce the infinite number of rules that a parameterized rewrite rule represents to a finite number.
Parameterized rewrite rules can formally be represented using substitution of hypergraphs~\cite{plump1994hypergraph} or by !-boxes in open graphs~\cite{bangBoxes}.
Since we describe optimization strategies as graph rewrite rules, we can be flexible with when and in what order we apply the strategies.
We apply the rules repeatedly until a fixed point is reached, \ie when no further changes are possible, making the process idempotent.
Each rule application preserves IPCG correctness in the sense of \cref{def:ipcg}, because input contracts do not get more specific, and output contracts remain the same.
For general, formal foundations of correctness-preserving graph transformation we refer to, \eg \cite{DBLP:conf/calco/OrejasGLE09}.
Methodologically, the rules are specified by pre-conditions, change primitives, post-conditions and an optimization effect, where the pre- and post-conditions are implicit in the applicability and result of the rewriting rule.

\subsubsection{Graph rewriting rules for decomposition}
\label{sub:decomposition_rules}
For the decomposition of integration processes into shareable and non-shareable parts, we focus on the transitions from a shareable pattern to a connected subgraph of non-shareable patterns, and vice versa.

\paragraph{Shareable to non-shareable transition}
The transition from a shareable pattern to a subgraph of non-shareable patterns adds a remote call to the shareable part and a receiving endpoint to the non-shareable part.
For the transformation to be correct, the original IPCG is split into two IPCGs, connected by a remote call configuration. 
\labeltitle{Change primitives:}
The rewriting is depicted by the rule in \cref{fig:nonshareable_rewriting_1}\footnote{We briefly repeat the notation for a better understanding: nodes denote patterns ($P_1$, .., $P_n$) and node clouds stand for subgraphs of several nodes ($N_1$). Edges represent message channels.
The coloring depicts the interface or intersection of the left and right hand sides of the rule: red elements change, green elements remain unchanged.}.
The left hand side of the rule detects a shareable pattern $P_1$ that is connected to one non-shareable pattern that itself might be connected to several non-shareable patterns without a shareable one in-between.
\begin{figure}[h!]
	\centering
	\includegraphics[width=0.7\columnwidth]{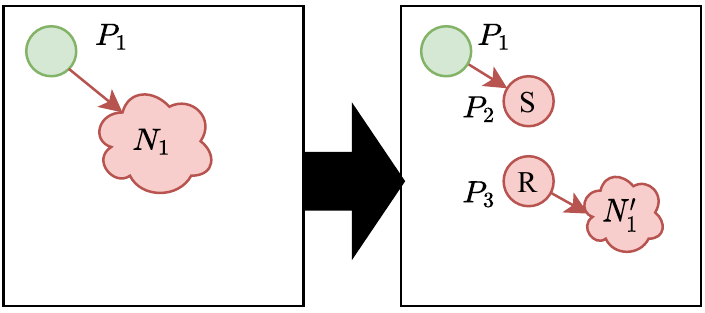}
	\caption{Shareable pattern to non-shareable rewriting}
	\label{fig:nonshareable_rewriting_1}
\end{figure}
In the right hand side of the rule, the direct call from $P_1$ to one pattern in $N'_1$ is replaced by an external call pattern $P_2$ \cite{Ritter201736} connected to $P_1$ that forwards the message to a newly added event-based receiver endpoint pattern $P_3$ \cite{hohpe2004enterprise} that forwards the message to the respective pattern in $N'_1$.

\labeltitle{Effect:} The change is especially beneficial for cases in which the size of the non-shareable subgraph $N'_1$ is greater than the sum of added external call and event-based receiver patterns, or if the resulting subgraphs have a more cost-optimal fit in the respective CPEE problem.

\paragraph{Non-shareable to Shareable transition}
Similarly, the transition from a subgraph of non-shareable patterns to a shareable pattern splits the original IPCG into two IPCGs connected via implicit remote communication configuration.

\labeltitle{Change primitives:}
In the rewriting rule depicted in \cref{fig:nonshareable_rewriting_2}, the left hand side matches any configuration with a transition from one non-shareable pattern (potentially connected to other non-shareable patterns) $N_1$ to a shareable pattern $P_1$.
\begin{figure}[h!]
	\centering
	\includegraphics[width=0.7\columnwidth]{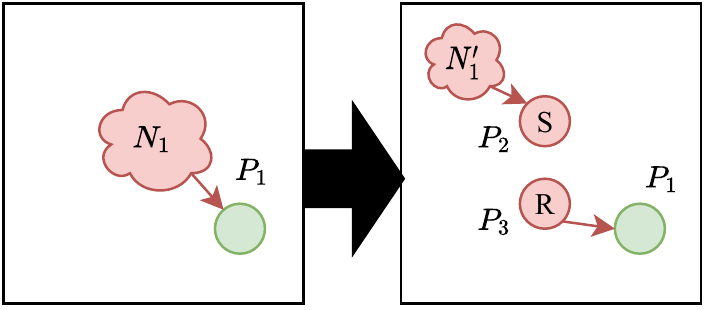}
	\caption{Non-shareable to shareable pattern rewriting}
	\label{fig:nonshareable_rewriting_2}
\end{figure}
The right hand side of the rule shows the replacement of the direct call between the patterns by an external call $P_2$ that connects the non-shareable subgraph $N'_1$ with the shareable $P_1$ via the receiver $P_3$.

\labeltitle{Effect:} same effect as for \emph{shareable to non-shareable transition}.

\Cref{ex:composition_partitioning_cut} shows the application of these rules to the process in \cref{fig:composition_partitioning}.
\begin{example}
\label{ex:composition_partitioning_cut}
When iteratively applying the two previously defined rules to the IPCG in \cref{fig:composition_partitioning}, the IPCG will be decomposed as shown in \cref{fig:composition_partitioning_cut}.
\begin{figure}[bt]
	\centering
	\includegraphics[width=\linewidth]{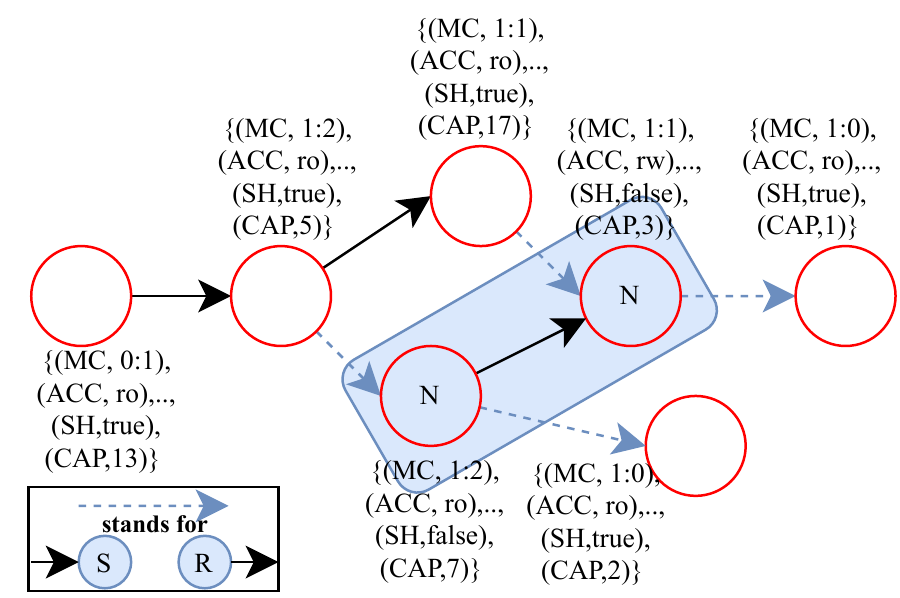}
	\caption{Decomposition of processes from \cref{ex:composition_partitioning}}
	\label{fig:composition_partitioning_cut}
\end{figure}
    The first rule matches the transition from the conditional fork to the first non-shareable pattern, by adding a remote communication (shown as dashed, blue arrow).
    Similarly, the second shareable to non-shareable transition will be disconnected, leaving a shareable IPCG with three of the original patterns and two external call patterns, connected to them.
    No further rules match on that IPCG.
    The resulting second IPCG with two connected non-shareable patterns has two matches of the non-shareable to shareable rule, eventually leading to two additional, shareable processes with one pattern each.
    In total there are three shareable and one non-shareable IPCG left for cost calculation.
\qedblack
\end{example}

\subsubsection{Graph rewriting rules for cost-aware modeling}
\label{sub:cost_rewriting_rules}
In \cref{tab:optimization_strategies} we listed known optimization strategies from our previous work \cite{DBLP:conf/debs/0001MFR18}, added two new optimizations and assigned their cost reduction potential.
Since the new optimizations (\ie \emph{combined neighbors}, \emph{reduce router to routing slip}) have not been formally defined in \cite{DBLP:conf/debs/0001MFR18}, we subsequently specify them as graph rewritings.
Consequently, they can then be used together with the already existing optimizations in \cite{DBLP:conf/debs/0001MFR18}.

\paragraph{Combine neighbors} Two patterns are considered neighbors, if they are sequentially connected to each other in an IPCG.
Neighbors can be combined into one pattern, if their configurations on an $\EL$ data element level are non-conflicting, non-dependent (\eg one pattern deletes a data element that another one tries to read) or their operations can be executed by the same pattern instance in the order of their appearance in the IPCG.
Hence, to combine two or more neighbor patterns, we require that they are all of the same pattern type (\eg content enricher).
We are currently aware of two message tranformation patterns in SAP CPI \cite{sap-hci-content} relevant to this work (\ie content enricher, message translator) that could be automatically combined.

\labeltitle{Change primitives:} The rule is given in \cref{fig:combine_patterns}, where sub-sequence $SSQ_1$ denotes a sequence of non-conflicting neighbor patterns of the same type that are combined to one pattern $P_3$ of that type.
\begin{figure}[h!]
	\centering
	\includegraphics[width=0.7\columnwidth]{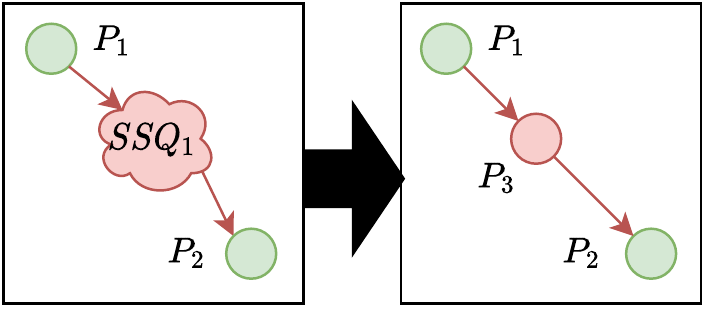}
	\caption{Combine pattern rewriting}
	\label{fig:combine_patterns}
\end{figure}

\labeltitle{Effect:} The model complexity is reduced by translating the explicitly modeled sequence of patterns to implicit configurations in just one pattern (\eg XPath).
Assuming that each pattern instance allocates redundant resources (\eg memory), the static resource consumption of this IPCG is reduced by the number of patterns in $SSQ_1$ minus the new pattern $P_3$.
Additional cost reduction potential lies in merging resource artifacts like files (\eg XSD, WSDL) to just one instance each.
When these resource artifacts are loaded into a process instance, they lead to higher main memory consumption, and thus eventually costs.

\paragraph{Reduce router to routing slip} If a content-based router is directly connected to one or many external call patterns having the same inbound contract (\ie usually different endpoints of one type of receiver system), then the router and all external call patterns can be replaced by a content enricher pattern, which configures the recipient according to the conditions of the router, followed by just one external call pattern that takes the configuration into account and forwards the message to the respective receiver.

\labeltitle{Change primitives:} The rule is given in \cref{fig:router_routing_slip}, where $P_1$ is a content-based router with $n-1$ conditional branches to external call patterns $\{P_2$, \dots, $P_n\}$ to the same receiver $R$.
\begin{figure}[h!]
	\centering
	\includegraphics[width=0.7\columnwidth]{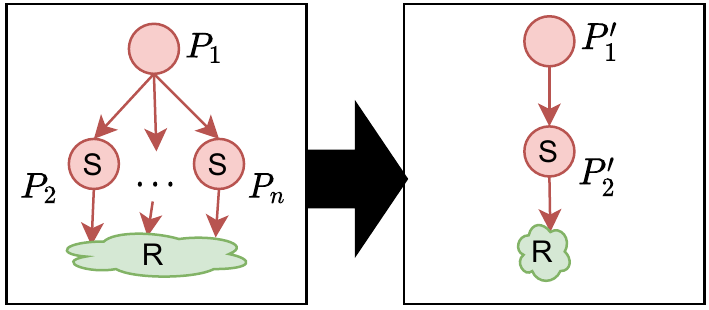}
	\caption{Router to routing slip pattern rewriting}
	\label{fig:router_routing_slip}
\end{figure}
On the right hand side of the rule, a content enricher $P'_1$ is added instead of the router $P_1$ and configured to add a routing slip to incoming messages according to the conditions from $P_1$.
Subsequently, only one external call pattern $P'_2$ is required, which checks the routing slips in the message to forward to the correct receiver in $R$.

\labeltitle{Effect:} Similar to the combine patterns rewriting, the model complexity is reduced, and thus shrinking the required pattern instances to two, instead of $n+1$ patterns in the explicit routing case.

\subsection{Discussion --- correctness}
\label{sub:correctness}

The correctness of the introduced process changes (\ie process remains functionally correct after changes have been applied), represented by graph rewriting rules, is crucial for the cost-aware modeling process (cf. \cref{fig:cost_aware_modeling_process}).
With the representation of integration processes as IPCGs we ensured that they are structurally correct by definition (cf. \cref{def:ipcg}).
However, we did not yet show that IPCGs remain structurally correct after applying the introduced rewritings rules.
Hence, we subsequently discuss the correctness properties of each graph rewriting individually based on the definition of matching contracts in IPCGs.
\begin{lemma}
    \label{lemma:rewritings}
	The rewritings in~\cref{sub:decomposition_rules,sub:cost_rewriting_rules} result in correct processes, if their pattern contracts match according to \cref{def:ipcg} and if all newly created graphs remain reachable.
\end{lemma}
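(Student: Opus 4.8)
The plan is to prove \cref{lemma:rewritings} by checking each of the five rewrite rules (the two decomposition rules in \cref{sub:decomposition_rules} and the three cost-aware rules in \cref{sub:cost_rewriting_rules}) individually against the correctness criteria of \cref{def:ipcg}, namely: (a) the underlying IPTG stays correct in the sense of \cref{def:ipg} (start/end patterns present, channel cardinalities respected per pattern type, graph connected and acyclic), and (b) for every pattern $p$ the inbound contract still matches the outbound contracts of its predecessors via $\match$. Because the rewrites are DPO rule applications with Habel--Plump relabelling, the only places where correctness can break are \emph{at the boundary} where the replacement graph $R$ is glued into $G$; everything inside the untouched context (the clouds $N_1$, $N'_1$, $SSQ_1$, the receiver $R$) is preserved by the interface $K$, so I only need to argue locally around the newly inserted patterns ($P_2$, $P_3$ in the decomposition rules; $P_3$ in combine-neighbors; $P'_1$, $P'_2$ in router-to-routing-slip).

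First I would handle the two decomposition rules together, since they are mirror images. The structural check: replacing a direct edge from a shareable pattern $P_1$ to a non-shareable subgraph by the chain $P_1 \to P_2 \to P_3 \to N'_1$ (with $P_2$ an external call of type external call, $P_3$ an event-based receiver which is a start pattern in the new IPCG) keeps all message-processor-style cardinalities intact on the preserved nodes, introduces $P_2$ with $|\bullet P_2|=1$ and one of its two outgoing edges going to $P_3$ (matching the external-call cardinality $1{:}2$ from \cref{def:ipg}), and $P_3$ as a fresh start node with $|\bullet P_3|=0$, $|P_3\bullet|=1$. Acyclicity is preserved because we only ever replace one edge by a directed path; connectivity within each resulting IPCG must be argued, and this is exactly the \enquote{all newly created graphs remain reachable} side-condition in the lemma statement --- I would point out that the second, non-shareable IPCG has $P_3$ as its unique start and every old node of $N'_1$ is still reachable from it, and that each IPCG separately contains a start and an end (the non-shareable fragment may need an end pattern added, or it already ends in one). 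For the contract check, the key observation is that $P_2$ and $P_3$ are configured to \emph{copy the message through unchanged}: $P_2$'s inbound contract equals the old outbound contract of $P_1$ on the cut edge, its relevant outbound contract is the same, $P_3$'s inbound contract is that same contract, and $P_3$'s outbound contract is exactly what the head pattern of $N'_1$ expected as its inbound contract. Hence $\match$ holds at $P_2$ (against $\oC(P_1)$, which is literally unchanged on that edge), at $P_3$ (against $\oC(P_2)$, which is a copy), and at the head of $N'_1$ (against $\oC(P_3)$, which is a copy of what used to be $\oC(P_1)$). Since \enquote{input contracts do not get more specific and output contracts remain the same} — the principle already invoked in \cref{sec:decomposition} — no previously-satisfied match is invalidated.

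Next I would do the three cost-aware rules. For \emph{combine neighbors}: the precondition (same pattern type, non-conflicting/non-dependent $\EL$-level configurations) guarantees that the merged pattern $P_3$ can be assigned the inbound contract of the first pattern in $SSQ_1$ and the outbound contract of the last; the structural cardinalities are preserved because a sequence of $1{:}1$ message-processor patterns collapses to a single $1{:}1$ pattern, and acyclicity/connectivity are trivially preserved (a path is replaced by a shorter path). The contract match at $P_3$'s successor holds because $\oC(P_3)$ is defined to equal the old outbound contract of the last pattern of $SSQ_1$; the non-conflicting condition is precisely what ensures the composite transformation realizes the same net effect on data elements, so the $\EL$-union condition of \cref{def:matching-contracts} still closes. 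For \emph{router to routing slip}: here I would check that replacing the content-based router $P_1$ (type condition, $1{:}n$) plus external calls $P_2,\dots,P_n$ (each external call, $1{:}2$) by a content enricher $P'_1$ (message processor, $1{:}1$) followed by one external call $P'_2$ (external call, $1{:}2$) to the same receiver $R$ keeps the IPTG correct — the subtle point is that the $n{-}1$ branches are genuinely collapsed, which is legitimate only because (by the rule's precondition) all the external calls share the \emph{same} inbound contract, so the routing decision becomes pure configuration data (the routing slip) rather than control flow; I'd note the response edges from the external calls to $R$ (the second outgoing edge of an external call) also coalesce consistently. The contract argument: $\iC(P'_1)=\iC(P_1)$, $\oC(P'_1)$ adds only a header element (the routing slip) which $P'_2$'s inbound contract is configured to accept, and $\oC(P'_2)$ toward $R$ equals the common outbound contract the $P_i$ had toward $R$, so every match is preserved.

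The main obstacle I expect is \textbf{not} the contract-matching bookkeeping (which is uniform and follows the \enquote{contracts don't get more specific} mantra) but rather the \emph{connectivity/reachability} and \emph{well-formedness of the freshly split IPCGs} in the decomposition rules: a DPO rewrite that cuts one edge and splices in a remote call inherently disconnects the host graph into two components, so strictly speaking the output of a \emph{single} rule application is not one correct IPCG but a pair of them, and one must be careful that (i) each component acquires its own start and end pattern, (ii) every node stays reachable from its component's start — this is why the lemma carries the explicit hypothesis \enquote{all newly created graphs remain reachable}, and I would make that dependence precise by showing reachability is exactly the condition needed to rule out the degenerate case where $N'_1$ (or the shareable remainder) fails to contain a start/end after the cut. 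A secondary subtlety to flag is that the relabelling of $P_1$'s characteristics (e.g. going from $c$ to $c'$ as in \cref{fig:example_rewriting}) must be shown to preserve the $\pchar$-side of correctness, i.e. the new label is still a valid element of $2^{\PC}$ under \cref{def:characteristics_extended}; this is immediate but should be stated so the appeal to Habel--Plump relabelling DPO is justified rather than asserted.
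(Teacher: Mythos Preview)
Your proposal is correct and follows the same overall strategy as the paper --- a case-by-case inspection of each rewrite rule, verifying that contract matching in the sense of \cref{def:matching-contracts} is preserved at the glued boundary. Two differences are worth noting. First, you are more thorough than the paper on the \emph{structural} side: you explicitly check IPTG well-formedness (channel cardinalities per pattern type, acyclicity, presence of start/end in each component), whereas the paper's proof sketch essentially takes this for granted and concentrates on contracts and reachability. Second, for the decomposition rules you configure the inserted external-call and receiver patterns with \emph{copies} of the cut-edge contracts, while the paper instead gives them maximally permissive contracts --- every $\CPT$ entry set to \enquote{any} and $\EL$ unconstrained --- so that $\match$ holds against whatever $\oC(P_1)$ happens to be and whatever $\iC(N'_1)$ expects. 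Both arguments work; the paper's is closer to how such communication primitives are actually typed (they are generic pass-throughs), and it avoids having to instantiate the new patterns per cut. Finally, a small slip: you say \enquote{three cost-aware rules}, but only two are defined in \cref{sub:cost_rewriting_rules} (combine neighbors and router to routing slip), and indeed you only treat those two.
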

\begin{proof}[Proof (sketch)]
    In each step, the rewriting modifies certain nodes.
    By \cref{lemma:rewritings}, it is enough to show that the affected parts of the interpretation of the graphs retain matching pattern contracts and all decomposed parts remain reachable as in the original graph, which we show for each rewriting subsequently.
    
    \labelsubtitle{Shareable to non-shareable transition} The right hand side of the rule in \cref{fig:nonshareable_rewriting_1} introduces a remote call pattern $S$ connected to $P_1$ and an event-based receiver pattern $R$ connected to a pattern in $N'_1$.
    The process is correct, if the inbound pattern contract $inContr(S)$ of the introduced external call pattern $S$ accepts any $outContr(P_1)$ of $P_1$:
    \begin{align*}
        inContr(S) = & ({(ENCR, any)}, {(SIG, any)},..,\\
                     & {(PL,any)}) \enspace , \\
        outContr(S) = & ({(ENCR, any)}, {(SIG, any)},..,\\
                     & {(PL,any)}) \enspace ,
    \end{align*}
    and its outbound contract $outContr(S)$ passes any input onward without introducing further restrictions in the contract.
    Further, the inbound pattern contract $inContr(R)$ of the receiver pattern $R$ accepts any $outContr(S)$ of $P_2$ 
    \begin{align*}
        inContr(R) = & ({(ENCR, any)}, {(SIG, any)},..,\\
                     & {(PL,any)}) \enspace , \\
        outContr(R) = & ({(ENCR, any)}, {(SIG, any)},..,\\
                     & {(PL, any)}) \enspace ,
    \end{align*}
    and the event-based receiver's outbound contract $outContr(R)$ passes any input onward without introducing further restrictions in the contract for a given inbound contract $inContr(N_1)$ with
    \begin{align*}
        inContr(N'_1) = & ({(ENCR, X)}, {(SIG, Y)},..,\\
                     & {(PL,Z)}) \enspace , \\
        outContr(P_1) = & ({(ENCR, X)}, {(SIG, Y)},..,\\
                     & {(PL,Z)}) \enspace ,
    \end{align*}
    as well as $outContr(P_1)$ matches $inContr(N'_1)$.
    Finally the remote call configuration of the external call point to the event based receiver, making the remote IPCG reachable.
    
    \labelsubtitle{Non-shareable to shareable transition} Same as for \emph{shareable to non-shareable} transition, but in reverse direction, starting from one pattern in $N_1$.
    
    \labelsubtitle{Combine patterns} The rewrite rule in \cref{fig:combine_patterns} is correct, if the pattern contracts of the patterns in the sub-sequence $SSQ_1$ can be combined to an inbound contract $inContr(P_3)$ of pattern $P_3$ that matches $outContr(P_1)$ of the preceding pattern $P_1$ and an outbound contract $outContr(P_3)$ matching the $inContr(P_2)$ of the succeeding pattern $P_2$.
    That is given through the rule matching only patterns of the same type and with non-conflicting, non-dependent changes to the message elements.
    That means, the pattern dependent contract parts like $\{(ENCR, yes)\}, \{(SIG, any)\}$ are all the same due to the same type and each pattern either modifies different data elements in $\{(PL,Z)\}$ (\eg a pattern does not delete data elements required by a subsequent) or requires read-only access, and thus the union of all accessed data elements denotes the inbound contract and the outbound contract is just the one of the last pattern in the sequence, satisfying the successor's inbound contract.
    
    \labelsubtitle{Router to routing slip} The rewrite rule in \cref{fig:router_routing_slip} is correct, if the outbound contract $outContr(P'_1)$ of the newly added content enricher $P'_1$ matches the inbound contract of the receiver $inContr(R)$, and the pattern contracts of the external call pattern $S$ accepts any configuration, same as for the transition rules before.
    To replace the router on the left hand side of the rule $P_1$ with an outbound contract of 
    \begin{align*}
        outContr(P_1) = & ({(ENCR, X)}, {(SIG, Y)},..,\\
                     & {(PL, Z)}) \enspace ,
    \end{align*}
    the enricher on the right hand side $P'_1$ requires the same outbound contract, but will add an additional header \texttt{HDR} with the routing information that was explictly modeled before:
    \begin{align*}
        outContr(P'_1) = & ({(ENCR, X)}, {(SIG, Y)},..,\\
                     & {(HDR, URL),(PL, Z)}) \enspace .
    \end{align*}
    Its subsequent external call pattern $S$ requires the header to direct the message to the correct endpoint, but either remove the header to have the same $outContr(P_1)$ as $P_1$ on the left hand side of the rule or the receiver has to ignore the additional header by having $\{(HDR, any),(PL, Z)\}$ in its inbound contract.
    That is given through the rule matching only endpoints in $R$ with the same pattern contract (but different address).
\end{proof}

While structural correctness is an important property built-in to IPGCs, we reiterate that semantic correctness would be meaningful, but out of scope in this work, and thus refer to \cite{DBLP:conf/edoc/0001RMRS18,ritter2019formal,DBLP:phd/at/Ritter2019}.

\section{Cost-efficient process placement}
\label{sec:milp}

In this section we specify the CEPP problem and propose a cost model for an optimal solution of the CEPP, which we base on the multicloud INTaaS model (cf. objectives (i), (iii)--(iv)).

\subsection{Placement problem}
For Software-as-a-Service vendors (incl. INTaaS), the CEPP is a bin-packing problem, with a set of bins of different sizes and costs (\eg platforms or variants), and a set of items of different sizes (\eg pattern (sub-) processes). 
The objective is to find a cost efficient placement or assignment of the processes under the following constraints:
\begin{inparaenum}[\it (C1)] 
	\item each process is in exactly one container,\label{item:flowConstraint}
	\item each container is exactly one variant,\label{item:containerConstraint}
	\item the required process capacity must not exceed the container size,\label{item:sizeConstraint}
	\item no non-shareable process is assigned to a container which contains processes of a different tenant (security-awareness),\label{item:securityConstraint}
	\item the number of processes per container must not exceed the container's maximum capacity, \label{item:maxFlowNumber}
\end{inparaenum}
which essentially means that the items can be incompatible to each other (\eg non-shareable), and thus cannot be in the same bin or incompatible to a bin (\eg logistic company not on competitor's platform AWS).

\subsection{Optimal placement model}

The subsequently defined optimal mixed integer linear programming (MILP) model for CEPP considers constraints $C1$--$C5$.
For the sake of clarity,~\cref{tab:variables} summarizes the notation used throughout this work in terms of constants, result variables and auxiliary variables.
\begin{table}[tb]
    \centering
    \begin{tabular}{p{0.7cm}|c|p{4.5cm}}
         Model & Variable & Meaning \\
         \hline
         \hline
		 Const. & $A_i$ & Required capacity of process $i$ \\
		 & $Q$ & Maximum number of processes per container \\
		 \hline
		 Result & $x_{ij}$ & Boolean indicating if process $i$ is in container $j$ \\
		 & $y_{nj}$ & Container $j$ is variant $n$ \\
		 \hline
		 Auxi-liary & $C$, $N$, $F$ & Number of: containers, container variants, processes \\
		 & $B_j$ & Capacity of container $j$ \\
		 & $G_j$ & Costs of container $j$ \\
		 & $H_j$ & Number of processes in container $j$ \\
    \end{tabular}
    \caption{Overview of used variables in the cost model}
    \label{tab:variables}
\end{table}

\labeltitle{Basic constraints} The first constraint (cf. C1) ensures that every process is placed in exactly one container by enforcing that only one of the variables $x_{ij}$ for each process takes the value 1, whereas $x_{ij}$ denotes that process $i$ is placed in container $j$.
	\begin{align*}
	\sum_{j=1}^{C}x_{ij} = 1 & & \forall i 
	\end{align*}
Then we ensure that for each container exactly one variant is chosen (cf. constraint C2).
This is accomplished by using a Boolean variable $y_{nj}$, which shows whether container $j$ is container variant~$n$.
	\begin{align*}
	\sum_{n=1}^{N}y_{nj} = 1 & & \forall j 
	\end{align*}
The used capacity (cf. C3) is calculated by multiplying every process size $A_i$ with the Boolean variable $x_{ij}$ and then summing up all products.
It is essential that the maximum capacity $B_j$ of container $j$ is respected.
	\begin{align*}
	\sum_{i=1}^{F}A_i \cdot x_{ij} \leq B_j & & \forall j 
	\end{align*}
To model the maximum number of processes per container (cf. C5), the number of processes in a container $H_j$, must be less than or equal to the given limit $Q$.
%
	\begin{align}
	&& H_j &\leq Q && \forall j &&\label{eq:maxFlowNumber} \\
	\text{With:} && H_j &= \sum_{i=1}^{F}x_{ij} && \forall j \label{eq:auxH}
	\end{align}
%

\labeltitle{Security-aware} The shareability property (cf. C4) is checked pair-wise processes ($i$ and $k$), and matches if at least one of them is non-shareable and they belong to different tenants.
In that case they must not be assigned to the same container, which is why $x_{ij}$ and $x_{kj}$ must not be $1$ at the same time for every container $j$.
The constraint is only applied for two processes ($i$ and $k$), if at least one of them is non-shareable and if they belong to different tenants.
	\begin{align*}
	x_{ij} + x_{kj} \leq 1 & & \forall i, j, k
	\end{align*}

\labeltitle{Objective}
Instead of determining which container is used and then setting its cost to $0$ via additional constraints (\ie increasing the complexity), an additional container variant of 0 size and 0 cost is added to the data set.
A container with this \enquote{zero-cost}-variant is a container which is not used, since no process can be assigned to it without exceeding its size of $0$, and thus it will not cause any costs just like an unused container.
As a consequence the overall costs can simply be calculated as sum of all container costs as shown in \cref{eq:AMilpObjective}.
	\begin{align}
	\min \sum_{j=1}^{C} G_j & &  \label{eq:AMilpObjective} 
	\end{align}

\section{Local search heuristic}
\label{sec:heuristic}
A CEPP solution of objective function \cref{eq:AMilpObjective} is optimal, but combinatorial optimization problems are NP-hard \cite{korte2012combinatorial}, and thus computationally intractable for practical problem sizes (\ie the better the solution, the more calculation time).
Hence, we define a local search heuristic tailored to CEPP that generates approximated results instead of an optimal solution (cf. objective (v)).
In general, local search uses one solution as a starting point and improves it iteratively by applying transformations (cf. \cite{korte2012combinatorial}).

Briefly, from an initial solution (\ie using a problem specific heuristic or random solution candidate), alterations / transformations to that solution are applied, if they improve the solution.
In this way, the solution's \emph{neighborhood} (\ie solutions reachable through one transformation) are explored by evaluating the new, improved solution using a \emph{measure} acceptance criteria), which is used to guide the search \cite{DBLP:series/anis/2010-10}.
The transformations are applied iteratively, until a specified \emph{termination criterion} is met (\eg empty neighborhood, time limit, fixed transformation limit, no improvements in last $k$ steps).
We consider a form of local search called hill climbing, which accepts only transformations leading to a better result.

\subsection{CEPP representation}
While in a MILP model Boolean variables are used to indicate which process is placed into which container ($x_{ij}$), we represent which process $i$ is assigned to which container $cId$ by a tuple $(i, cId)$, and which container $cId$ is assigned to which variant $vId$ by a tuple $(cId, vId)$.

\begin{figure}[tb]
	\centering
	\includegraphics[width=\columnwidth]{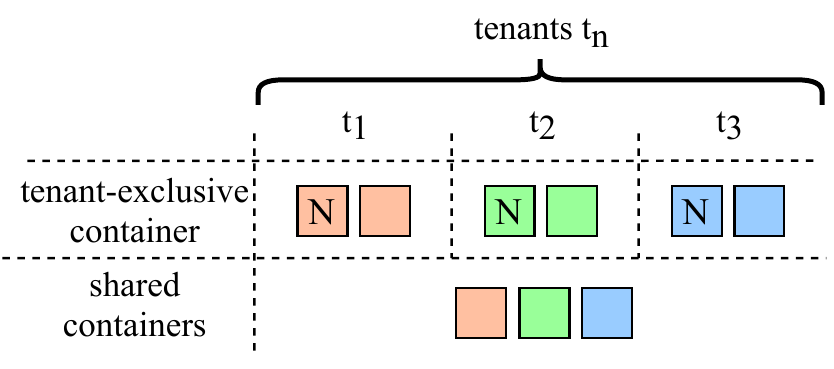} 
	\caption{Container types distinguished by their usage}
	\label{fig:containerTypes}
\end{figure}
Instead of constraints, we ensure that transformations result in feasible solutions, from a feasible initial solution.
For those transformations, we define \emph{shareable} (\ie only shareable processes) and \emph{tenant-exclusive} (\ie contain at least one non-shareable process of a tenant and thus can only contain processes of the same tenant) containers, as shown in \cref{fig:containerTypes}.
For simplicity, each container will be of one of these types without the ability to change the type during search.
\begin{figure}[bt]
	\centering
	\includegraphics[width=\columnwidth]{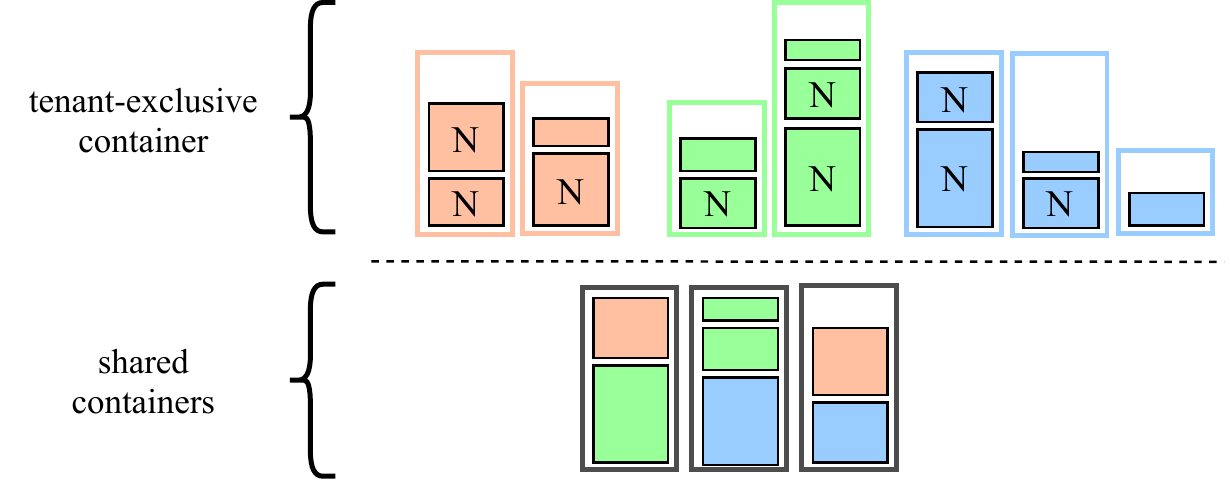} 
	\caption{Container types example}
	\label{fig:containerTypesExample}
\end{figure}
This is decided during creation of the initial solution and remembered by shareable $shCId \coloneqq set(cId)$ and tenant-exclusive container identifers $tenantExCId \coloneqq Tenant \times set(cId)$.
\Cref{fig:containerTypesExample} shows a CEPP example solution, where containers are marked tenant-specific according to their type (red, green and blue), with shareable containers outlined grey color, since they are not tenant-specific.

The resulting total costs are calculated by adding up the costs for all container variants, and thus represent the objective function similar to \cref{eq:AMilpObjective}.
Unused containers are not excluded, but an artificial variant is added to the list of possible variants, which has 0 costs and capacity (cf. similar to \cref{sec:milp}).

\subsection{Initial CEPP solution --- bin packing}
Instead of generating a random solution, we use a problem-specific heuristic, called first fit decreasing (FFD) \cite{korte2012combinatorial}.
FFD is a bin packing algorithm that first sorts the items descending by their size.
If the current item does not fit into a bin, a new one is created, while repeating until all items are distributed. 
We choose FFD for its unmatched runtime and its good, fixed bound for the maximum deviation $FFD(I) \leq \frac{11}{9} \cdot OPT(I) + \frac{2}{3} \label{eq:FFD}$ (cf. \cite{korte2012combinatorial,zhang1997new}).

In particular, we set out to calculate an initial, however, not optimal solution of the CEPP with a modified FFD, where a bin represents a container and a process denotes an item, which we subsequently motivate and describe in more detail.

\subsubsection{FFD-CEPP problems} Since FFD is designed to determine the number of bins for only one size it does not need a strategy for choosing the bin's size when adding a bin.
However, the CEPP is a variable size bin packing problem, \ie with bins of different sizes (cf. \cite{correia2008solving,kang2003algorithms,wascher2007improved}).
Consequently, FFD has to be extended by a rule must be specified that defines the bin to choose.

Moreover, in normal bin packing the only constraint is denoted by the size of a bin that must not be exceeded by the packed items.
But when conflicts between items are added to the problem, a distribution strategy based on FFD might not find a solution.
Therefore FFD must be additionally adapted for bin packing with conflicts (cf. \cite{jansen1999approximation,sadykov2013bin}).
Subsequently, we specify adaptations of FFD to variable size bin packing and bin packing with conflicts.

\subsubsection{FFD for variable size bin packing} The only requirement when adding a new bin is that the current item fits into the bin.
This can be achieved by always choosing the biggest bin available, which would violate the goal of cost minimization.
Hence, we sort the bins by their cost efficiency (\ie the higher the costs per capacity unit, \eg Euro/MB, the less efficient is the container), and not their sizes. 
When a new bin has to be initialized, its type will be chosen as follows: the bins are tested in the order of their efficiency, starting with the most efficient bin.
If the bin is big enough to host the item it is chosen, else the next bin is tested.

Since it is not known how many items are still going to be distributed when adding a new bin, we can never be sure how \enquote{full} the container will get.
However, when assuming that the container will not have much free capacity left at the end of the bin packing,
the cost efficiency indicates which bins should be preferred.
The more cost-efficient the bin, the less the costs at the end will be.

\begin{figure}[tb]
	\centering
	\includegraphics[width=\columnwidth]{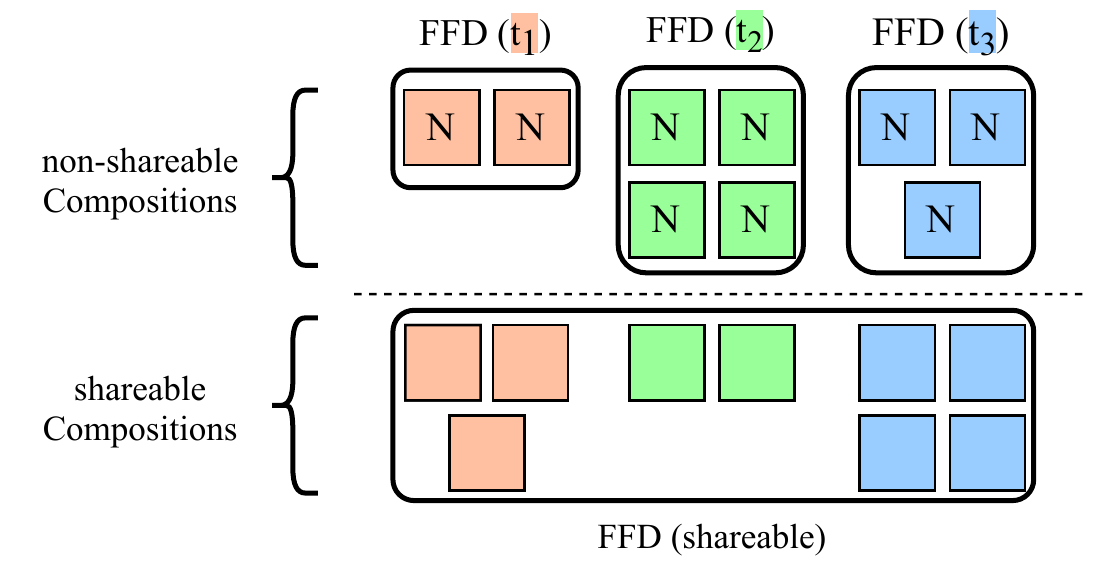} 
	\caption{Initial solution for process placement using FFD}
	\label{fig:initialSolutionWithFFD}
\end{figure}
\begin{figure*}[tb]
\begin{subfigure}{\columnwidth}
  \centering
  \includegraphics[width=1\columnwidth]{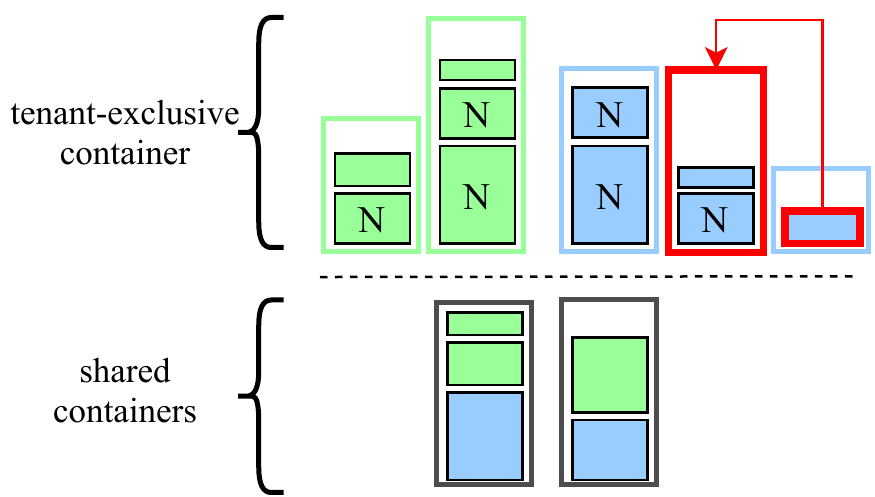} 
  \caption{Pre-condition}
\end{subfigure}%
\begin{subfigure}{\columnwidth}
  \centering
  \includegraphics[width=1\columnwidth]{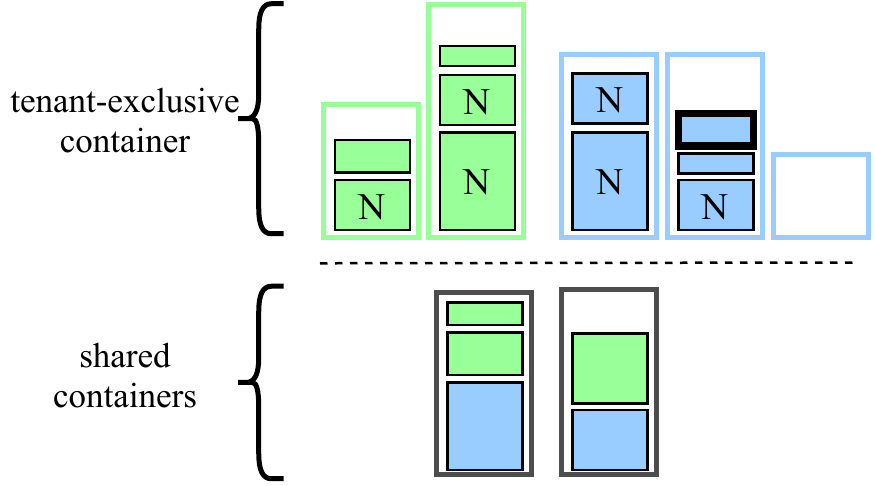} 
  \caption{Post-condition}
\end{subfigure}
\caption{Moving an process to another container}
\label{fig:transformMove}
\end{figure*}
\begin{figure*}[tb]
\centering
\begin{subfigure}{\columnwidth}
  \centering
  \includegraphics[width=1\columnwidth]{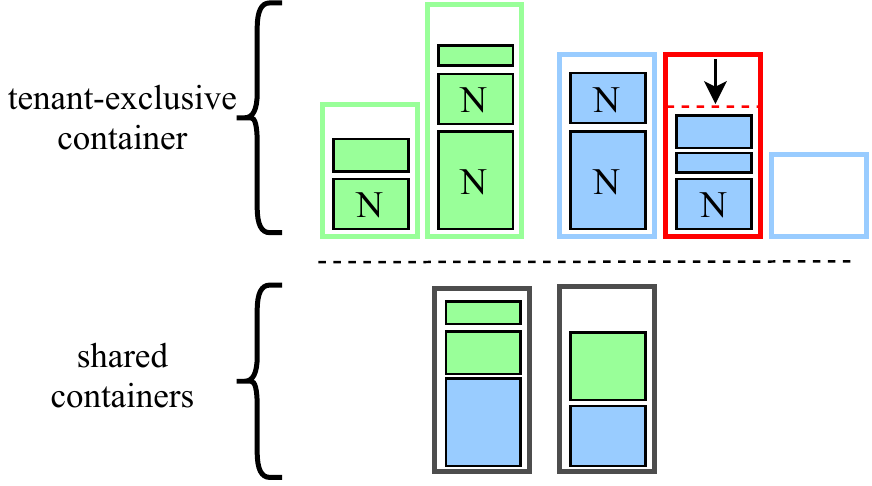} 
  \caption{Pre-condition}
\end{subfigure}%
\begin{subfigure}{\columnwidth}
  \centering
  \includegraphics[width=1\columnwidth]{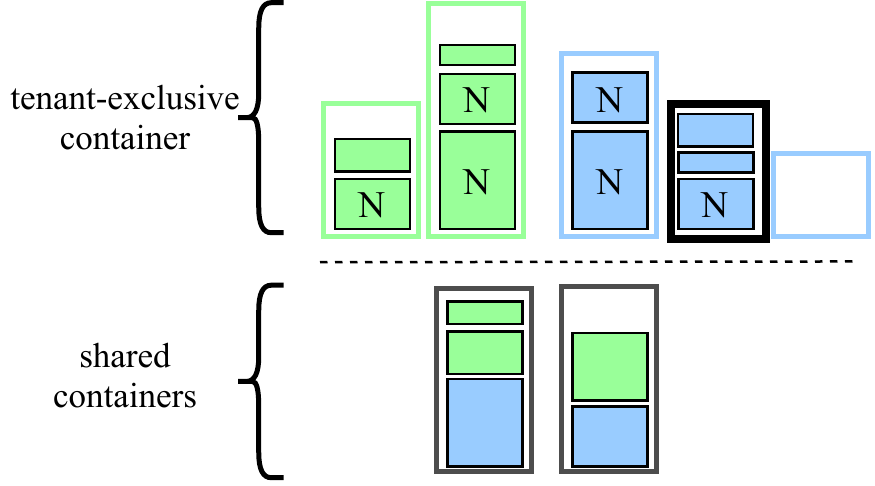} 
  \caption{Post-condition}
\end{subfigure}
\caption{Exchanging a container variant with a smaller variant}
\label{fig:transformShrink}
\end{figure*}
\begin{figure*}[bt]
\centering
\begin{subfigure}{1\columnwidth}
  \centering
  \includegraphics[width=1\columnwidth]{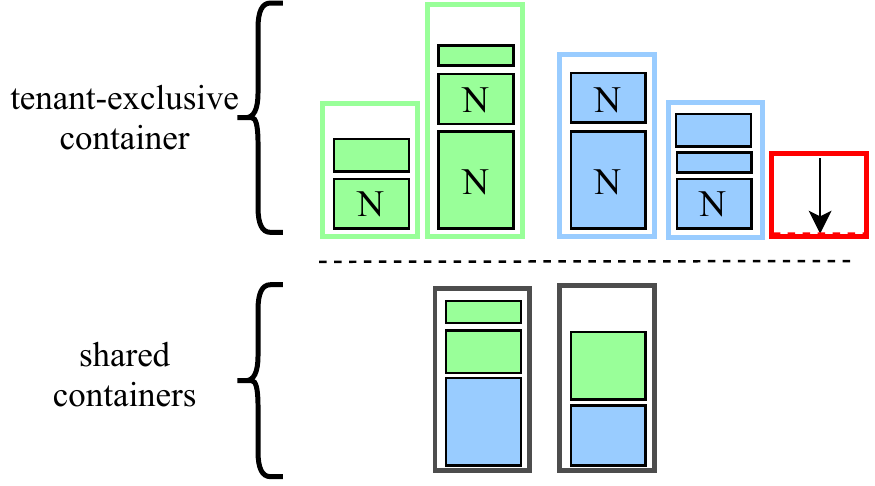} 
  \caption{Pre-condition}
\end{subfigure}%
\begin{subfigure}{1\columnwidth}
  \centering
  \includegraphics[width=1\columnwidth]{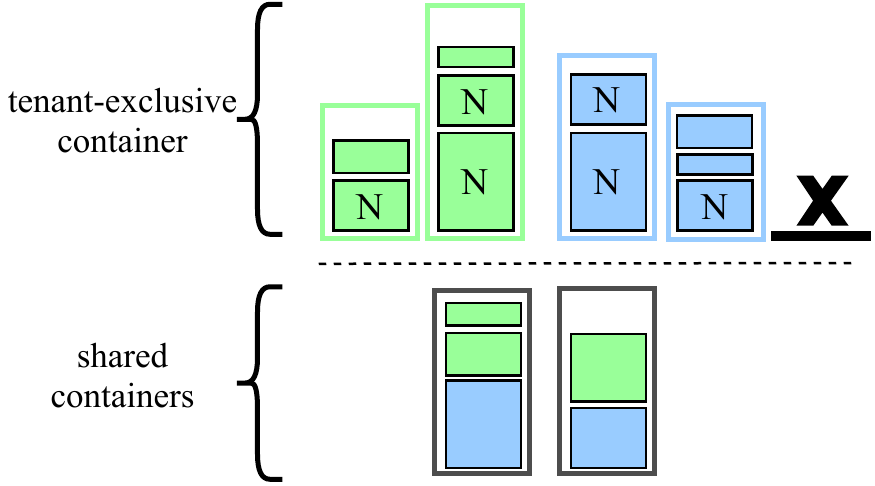} 
  \caption{Post-condition}
\end{subfigure}
\caption{Removing a container by assigning a variant of 0 size and costs}
\label{fig:transformRemove}
\end{figure*}
\begin{figure*}[bt]
\centering
\begin{subfigure}{1\columnwidth}
  \centering
  \includegraphics[width=1\columnwidth]{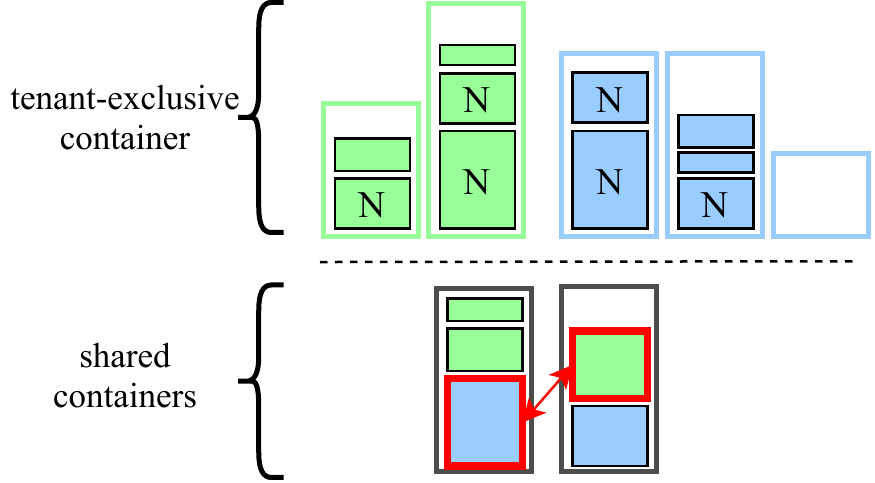}
  \caption{Pre-condition}
\end{subfigure}%
\begin{subfigure}{1\columnwidth}
  \centering
  \includegraphics[width=1\columnwidth]{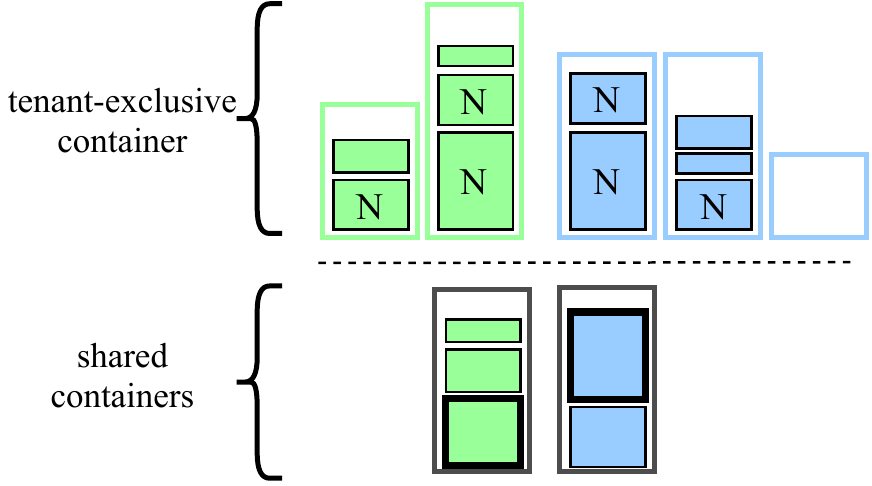} 
  \caption{Post-condition}
\end{subfigure}
\caption{Swapping a container}
\label{fig:transformSwap}
\end{figure*}

\subsubsection{FFD for bin packing with conflicts}
When processes cannot be assigned to the same bin due to a conflict (\ie non-shareable processes in combination with an process from a different tenant), we choose a strategy based on problem decomposition (\eg cf. \cite{korte2012combinatorial}) that does not require to test each container whether a conflict occurs (and choosing the next one).
We group all non-shareable processes per tenant and all shareable processes as separate sub-problems, which can also be applied when creating the initial result for the local search.
When using our variable size FFD to the sub-problems separately, no conflicts can occur within the sub-problems, and thus our modified FFD can be used to generate a feasible initial solution.
For example, \cref{fig:initialSolutionWithFFD} shows how FFD is applied to non-shareable IPCG, grouped by tenant.
The resulting placement into containers is depicted by rectangles and stored to $shCId$, $tenantExCId$, respectively. 

\subsection{CEPP transformations}

For CEPP we specify transformations \emph{shrink} and \emph{reduce} that rearrange container variants with the goal to use cheaper variants.
While \emph{shrink} exchanges a bin for a smaller and cheaper bin, \emph{reduce} removes an unused one.
As a second type of transformation on processes we define \emph{move} that rearranges the process placement by moving an process to a different container.
Notice that simply applying reordering transformations on processes alone would not reduce the overall costs.
Finally, we introduce a \emph{swap} transformation that exchanges two processes from different containers that results into a better compaction within containers.

Subsequently, we introduce four transformations as pre- and post-condition pairs that first show the intended change as precondition, before the transformation is applied, and the desired post-condition as the result, after a viable transformation was accepted.
The transformations guarantee conflict-free solutions by moving processes only to those container types ($shCId$ or $tenantExCId$), which are allowed according to their properties.
Since it cannot be guaranteed that the container size is not exceeded, this is checked by the acceptance criteria.

\subsubsection{Move} To rearrange IPGCs between containers, one random process is chosen, and then possible containers are determined, to which it can be moved, depending on its shareability characteristic.
If the process is non-shareable, only containers from $tenantExCId$ are allowed, else shareable containers from $shCId$ can be additionally used.
From the possible containers, one is randomly chosen and the process is assigned to it.
A prerequisite for moving an process is that the chosen container has enough free space, else, the move is discarded by the acceptance criteria.
\Cref{fig:transformMove} shows an example of a move transformation.

\subsubsection{Shrink} In this transformation we randomly select a container and change its variant.
The possible variants are ordered by their size, assuming that most expensive variants were removed beforehand.
The transformation looks up the variant identifier $vId$ assigned to a container in $(cId, vId)$ and reduces the number by 1.
Due to the ordered variants, a smaller variant identifier means a smaller and cheaper variant, and thus reducing the total cost.
If the capacity of the new container would be exceeded, the transformation is revoked.
This transformation cannot be applied to the smallest container variants (\ie with $vId == 0$).
An example of container shrinking is shown in \cref{fig:transformShrink}.

\subsubsection{Remove} If a solution (incl. initial solution) generates too many containers some of them should be removed.
A removed container is the same as a container which is not used.
Since an additional variant with 0 size and costs was added (\ie \enquote{shrink to zero}), this transformation is applied implicitly when a container is shrunk to this variant or a container is moved, leaving a container empty.
\Cref{fig:transformRemove} shows the removal of a container.

\subsubsection{Swap} To exchange two processes between different containers, two processes are chosen randomly, and moved to the other container, and vice versa.
The prerequisites for exchanging two processes are (a) that either both processes are shareable (\ie currently in \emph{shCId}) or both their containers are in \emph{tenantExCId}, and (b) that both containers have enough free space, else, the swap is discarded by the acceptance criteria.
\Cref{fig:transformSwap} shows an example of a swap.

\subsubsection{CEPP local search algorithm} We combine the previous specifications into \cref{alg:myLocalSearch}, which denotes our CEPP local search algorithm.
\begin{algorithm}[tb]
 \caption{CEPP local search heuristic}
 \label{alg:myLocalSearch}
 Generate initial solution: CEPP--FFD\;
 \While{Maximum number of transformations not exceeded}{%
  \ForEach{transformation $\in$ (Move, Swap, Shrink)}{%
  	Apply transformation\;
  \eIf{Solution is feasible and better or equal to original solution}{%
   Accept transformation\;
  }{%
	Discard transformation\;  
  }
 }
 }
\end{algorithm}
At first the initial result is generated using our modified FFD algorithm separately for non-shareable processes per tenant and all shareable processes (line 1).
Then move, swap and shrink transformations are iteratively applied (in list order) to the initial solution, which is modified with the goal to reduce costs (line 4).
In case a transformations is not applicable, it does not change the original solution.
A transformation is accepted if the modified solution is feasible and in addition to has lower or equal costs (line 6).
The search is terminated when all transformations were applied.

\section{Evaluation}
\label{sec:evaluation}
In this section we evaluate our CEPP-MILP and local search-based solutions with respect to cost-efficiency in multiclouds and performance.
Additionally, we conduct a case study on cost-aware process modeling leveraging the CEPP solution.
First we discuss potential cost-savings in case studies by example of two real-world scenario categories from SAP CPI \cite{sap-hci-content}, and then we conduct a quantitative study on the latency and solution quality of the local-search approach compared to the optimal solution.
Finally, we combine our solution for CEPP with process cost improvements (\ie combine pattern and router to routing slip rewriting) to study cost aware modeling regarding its viability and their relevance for real-world scenarios, as well as the interaction with the modeler.

\subsection{Setup}

We briefly introduce common aspects of our experiments, including a cost model for the valuation of the capacity $\text{CAP}$ (cf. \cref{def:characteristics}).

\labeltitle{Cost models} For all experiments we consider available cost models for container variants from the leading platform vendors Microsoft Azure and Amazon AWS, as well as one less-known, anonymous vendors (\ie X), as shown in \cref{fig:variantsWithoutOverlyExpensive}.
It can be seen that for most sizes at least two providers have an container variant offering.
One advantage of multicloud is too choose the container according to your requirements.
Since for CEPP only the two features of a container variant shown in the diagram (size and costs) are relevant, more expensive offerings (from other vendors) with the same sizes (\eg RAM) can be neglected, and thus reduced to $14$ variants (not shown), which also reduces the search space.
Although we acknowledge that there are other relevant resource capacities like CPU time, we exclusively use RAM measured in MB to distinguish container variants during our evaluation (mainly because we get RAM usages from benchmarks like \cite{ritter2016benchmarking}).
Our general approach, however, can be easily used with arbitrary resource capacities.

The AWS cost model seems to attract customers with cheaper, smaller container sizes (\ie break-even at around 8GB RAM), while becoming more expensive for more bigger container variants (\eg for $16$GB and $64$GB RAM the costs are slightly higher than for Azure and X).
For the latter, Azure and X seem to be preferable.
\begin{figure}
  \centering
  \includegraphics[width=\columnwidth]{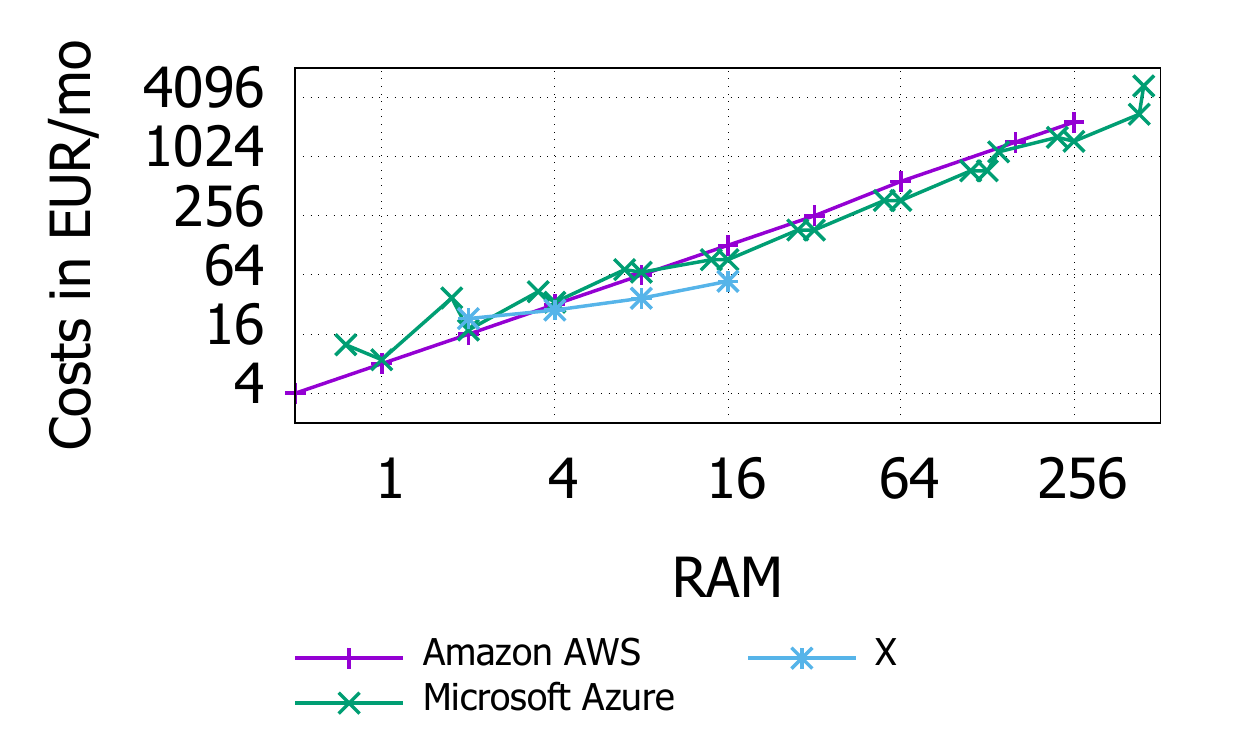}
  \caption{Costs per hoster container variant}
  \label{fig:variantsWithoutOverlyExpensive}
\end{figure}

\labeltitle{Process decomposition --- \enquote{cutting} sub-processes} As first indicated in \cref{fig:composition_partitioning} and further discussed in \cref{sec:decomposition}, bigger processes could be cut into smaller sub-processes that might be placed more efficiently (cf. \enquote{component decomposition} in \cite{DBLP:conf/europar/SilvaP17}).
This type of decomposition of processes into smaller shareable and non-shareable sub-processes is only considered as part of the first case studies.

All measurements are conducted on a Intel Core i5-6300U CPU with 2.40GHz 2.5Ghz, 16.0 GB RAM, Windows 10 operating system, and JDK 1.8. 

\subsection{Case studies --- is placement across multiclouds beneficial?}
\label{sub:motivating_example}
Before we investigate whether a placement across multiclouds is beneficial, we revisit our motivating example.
When applying our approach to the motivating example (cf. \cref{ex:motivation}), the optimal solution is a placement of the processes to two containers from different vendors (AWS (30,00 EUR/mo), X (20,00 EUR/mo)) with cost $50.00$ EUR/mo, resulting to cost savings between 10.00 and 30.00 EUR/mo.
Already this simple example indicates that sharing processes across different platforms (\ie multicloud) can help to reduce costs for the INTaaS vendor.
Now, let us look into two more complex, real-world scenarios.

\begin{figure*}[bt]
	\centering
	\includegraphics[width=.7\linewidth]{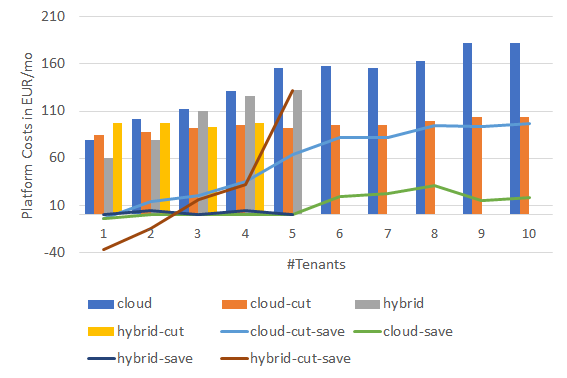}
	\caption{Costs and savings: original and cut}
	\label{fig:costs_savings}
\end{figure*}
\subsubsection{SAP Cloud Platform Integration: eDocuments}
The eDocument Electronic Invoicing is a solution for country-specific document management \cite{sap-hci-content}, allowing firms to interact with legal authorities. 

\paragraph{Setup} The dataset for country solutions like Italy and Spain contains $13$ distinct processes (six with non-shareable sub-processes), which we study with a uniform distribution over an increasing number of mock tenants, \ie representing fictitious tenants (due to confidentiality).

\paragraph{Results} The costs and savings for the calculated partitions compared to a hosting solution are shown in~\cref{fig:costs_savings} as \texttt{cloud}.
Notably, \enquote{cutting} non-shareable sub-processes (cf. \texttt{cloud-cut}) into $19$ distinct processes is not beneficial in the one-tenant case, but preferable for more tenants (cf. cost-saving for cloud \texttt{cloud-save} and cut cloud \texttt{cloud-cut-save} processes).
Working with the original processes becomes profitable from five tenants onwards.
The solver latencies for original and cut processes vary between 500ms and 2.7s, respectively.
The solution for the 19 \enquote{cut} processes and three tenants costs $112.32$ EUR/mo (cf.~\cref{fig:cloud_pre-partitioned}) for a separation to three vendor platform variants and identifies free capacity.

\paragraph{Conclusions} (1) a multicloud placement shows cost-savings especially for more complex cloud integration scenarios; (2)  \enquote{cutting} is beneficial from a certain number of tenants onwards; (3) small problem sizes are still computationally tractable.

\begin{figure*}[tb]
\centering
\begin{subfigure}{.7\linewidth}
  \centering
  \includegraphics[width=.8\linewidth]{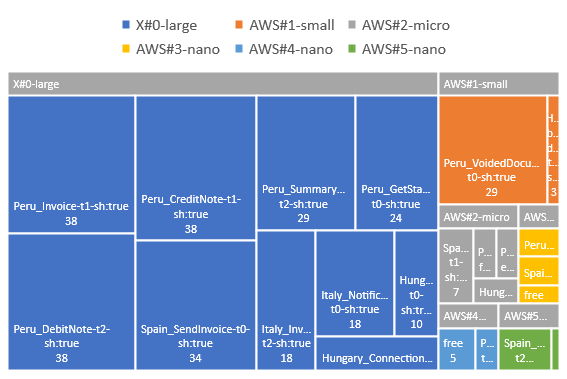} 
  \caption{eDocuments (costs=112.32 EUR/mo)}
  \label{fig:cloud_pre-partitioned}
\end{subfigure}%
\\
\begin{subfigure}{.7\linewidth}
  \centering
  \includegraphics[width=.8\linewidth]{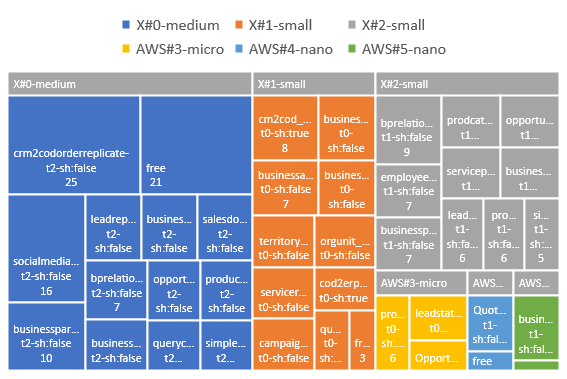} 
  \caption{C4C-CRM (costs=109.85 EUR/mo)}
  \label{fig:c4c_normal_partitioning}
\end{subfigure}
\caption{SAP CPI eDocuments and C4C-CRM processes from \cite{sap-hci-content}, $tx$ tenants with $x \in\{0,1,2\}$ (anonymized)}
\end{figure*}
\subsubsection{SAP Cloud Platform Integration: C4C-CRM}
The C4C solutions for the communication with on-premise Customer Relationship Management (CRM) applications~\cite{sap-hci-content}, abbreviated \emph{c4c-crm}~\cite{sap-hci-content}, can be considered a typical hybrid, corporate to cloud application integration~\cite{Ritter201736}.
The dominant integration styles --- according to the classification in~\cite{Ritter201736} --- are process invocation and data movement.
The state changes (\eg create, update) of business objects (\eg business partner, opportunity, activity) as well as master data in the cloud or corporate applications (\eg CRM, ERP) are exchanged with each other.

\paragraph{Setup} The dataset features $37$ distinct processes (seven with non-shareable sub-processes), which we uniformly distributed over an increasing number of mock tenants denoted as \texttt{hybrid}.

\paragraph{Results} The cost-savings for $78$ \enquote{hybrid-cut} processes, in \cref{fig:costs_savings}, make configurations with more than two tenants, five tenants compared to c4c-crm more profitable (cf. cost-saving for cut hybrid processes \texttt{hybrid-cut-save}).
The uncut hybrid processes do not show much cost-saving potential (cf. \texttt{hybrid-save}).
This can be explained due the higher number of processes after decomposition, but with a better shareablity ratio.
On the downside the solver times reach up to 2h for the latter, which makes it not applicable in practice (\ie no entries beyond five tenants).
The optimal solution for the $37$ original processes and three tenants is shown in~\cref{fig:c4c_normal_partitioning}.
The selected six platform variants from two vendors cost 109.85 EUR/mo.
Due to vendor preferences (all processes on one platform) for the non-cut processes, this solution is close to a hosting approach.
While the transmission latencies of these processes remain low, there are only minor cost-savings.
Without \enquote{pinned} platform assignments and cut processes, the cost-savings are high, but with higher latencies due to an additional remote call.

\paragraph{Conclusions} (4) already slightly higher numbers of processes are computationally intractable (\ie on-line heuristics required); (5) trade-off between latency and cost-savings is an open research question.

\subsection{Quantitative analysis --- local search heuristic}

As discussed before and also seen during the case studies (cf. conclusion (4)), larger problem sizes are intractable for the optimal CEPP-MILP approach.
For example, with more than $170,000$ processes and more than 30,000 tenants as in SAP CPI \cite{sap-hci-content}, a heuristic approach is required.
To assess our heuristic we discuss processing times and solution quality on an anynomized version of exactly this CPI data set (incl. original tenant assignments).

\subsubsection{Performance --- Latency} First we study the processing latency on a comparable subset of the 170k process data set, taking into account that the optimal solution cannot process larger problem sizes. 

\paragraph{Setup} Since we want to compare the optimal solution with our heuristic, the subset of processes for which we measured the performance step-wise for \{5, 10, 20, 50, 100\} processes, until we did not receive a result for the MILP-based solution in a certain period of time.

\paragraph{Results} \Cref{fig:latency} shows the results of the step-wise performance measurements for the FFD-heuristic with $1,000$ (\ie FFD-1k) and $10,000$ transformation iterations (\ie FFD-10k) and the process data \{all-5, ..., all-100\} for $5$, .., $100$ processes, as discussed.
\begin{figure}
  \centering
  \includegraphics[width=1\columnwidth]{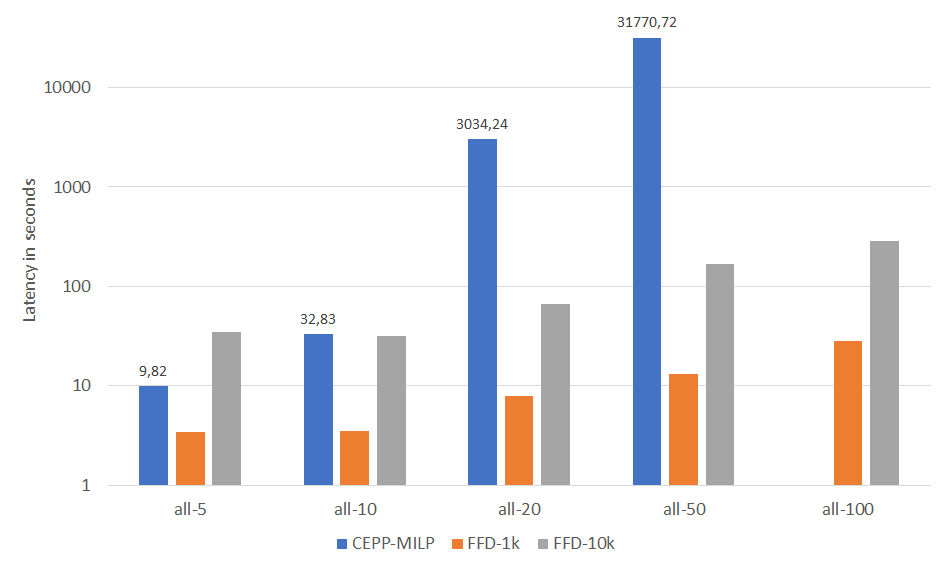} 
  \caption{Latency: modified FFD heuristic}
  \label{fig:latency}
\end{figure}
It can be seen that almost for every problem instance the performance was increased by more than 90\%.
With an increasing number of transformations, the latency increases modestly.
In contast, the MILP approach shows enormous increases in solving times (none for all-100).
For very small instances (\eg five IPGGs), 10k transformations even leads to a performance decrease, compared to the optimal solution.
MILP-solver could not calculate a result for $100$ processes within $12$ hours, leading to an end of the experiment.

\paragraph{Conclusions} (6) the modified FDD heuristic can be applied in practical INTaaS cases; (7) for small problem sizes, the MILP-approach is competitive.

\subsubsection{Solution quality} While the performance results are promising, we evaluate the quality of the solutions of the heuristic approach, compared to the optimal solution (cf. \cref{eq:AMilpObjective}).

\paragraph{Setup} same setup as for latency experiments.

\paragraph{Results} \Cref{fig:quality} shows the solution quality of the heuristic compared to the optimal solution up to $50$ processes.
While the solution quality of the heuristic on the cloud and hybrid data sets from our case study are close to optimal, \ie showing deviations of less than 10\% (not shown), the deviations increase to values between 30\% and 40\%.
\begin{figure}
  \centering
  \includegraphics[width=1\columnwidth]{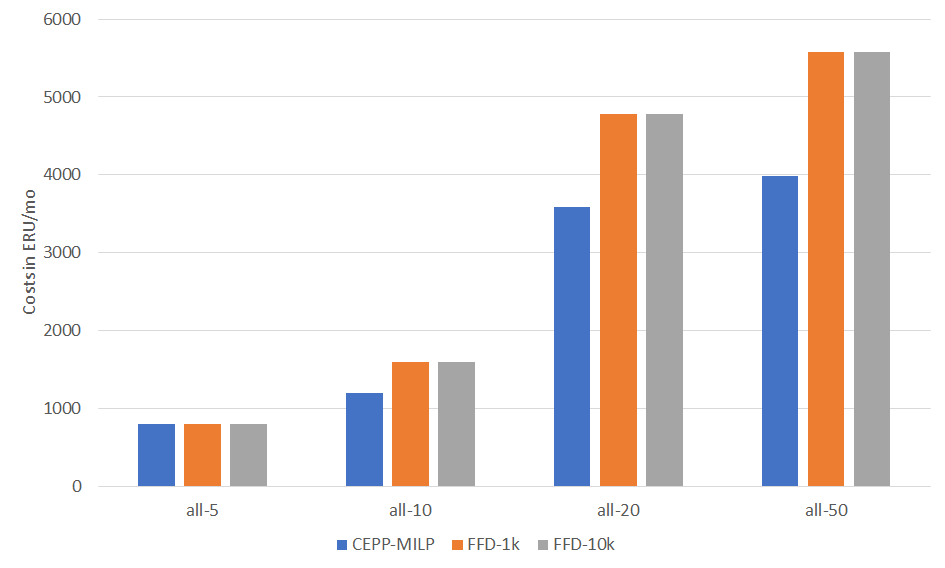} 
  \caption{Solution quality: modified FFD heuristic}
  \label{fig:quality}
\end{figure}
Solely, for a small number of processes, the heuristic result is optimal.

\paragraph{Conclusions} (8) solution quality of the heuristic is approximately off by $\frac{1}{3}$.

\subsection{Putting cost-aware modeling into practice}

While the results from the case studies show potential cost savings from the perspective of the INTaaS providers (considering their customers' requirements), they do not allow to answer questions of process modelers like \enquote{What does a process cost?} and \enquote{Is this process cost-efficient?} (without development and maintenance costs).
These questions might occur during the development of the processes, which could allow the user who creates the process to budget the development.
However, when recalling the cost-aware modeling process in \cref{fig:cost_aware_modeling_process}, the interaction with the modeler requires \emph{viability} regarding quick round-trip times between the modeling step and the combined correctness, cost calculation change application iterations, leading to the change proposal step, and \emph{relevance} regarding actual modeling improvement potentials.
Hence, we first quantitatively study viability and relevance on a real-world setup from SAP CPI \cite{sap-hci-content}, and then discuss cost-aware modeling in the context of our motivating example (cf. \cref{fig:cdd}).

\subsubsection{Viability --- modeler interaction}
The presented heuristic allows for finding solutions of practical problem sizes like the $170,000$ processes and more than $30,000$ tenants \cite{sap-hci-content} within half an hour.
While this allows INTaaS vendors to use our solution one could argue that even shorter round-trips are required to give the modeler timely feedback or change proposals with improved costs.
These round-trip times are largely dominated by the cost calculation step in \cref{fig:cost_aware_modeling_process}.

We recall that the structural correctness is built-in to the modeling (cf. \cref{sec:decomposition}), and thus does not take any extra time.
For semantic correctness, model checking is applied only to the current process, however, could take a significant amount of time (\eg state explosion problem \cite{clarke2011model}).
Since we do not consider semantic correctness in this work, we assume that no checks are performed.
Further, applying the decomposition (cf. \cref{sub:decomposition_rules}) and model improvement rules (cf. \cref{sub:cost_rewriting_rules}) is local to the process and consequently dependent on the size of the process as well as the time to converge (\ie no more rule applicable or no changes to the model anymore), leaving the cost calculation heuristic as the bottleneck.
However, since processes will run \enquote{locally} in certain regions, from a modeler's perspective, not all cloud regions in a INTaaS multicloud setup are relevant and would need to be checked regarding costs.
For example, the process in \cref{fig:cdd} will exclusively run in Italy or more generally in the \emph{European} region.
In fact, the costs for this particular process could be calculated in a subset of the complete aforementioned real-world problem size.
Hence, we subsequently analyze the anonymized problem sizes of SAP CPI \cite{sap-hci-content} by region to give a more realistic impression into the viability of our solution.

\paragraph{Setup} We group telemetry data from SAP CPI (\eg tenants, processes) by geographical region, each combining several local data centers.

\paragraph{Results} \Cref{fig:metabase} shows load shares on (anynomized) regions.
\begin{figure}
  \centering
  \includegraphics[width=.6\columnwidth]{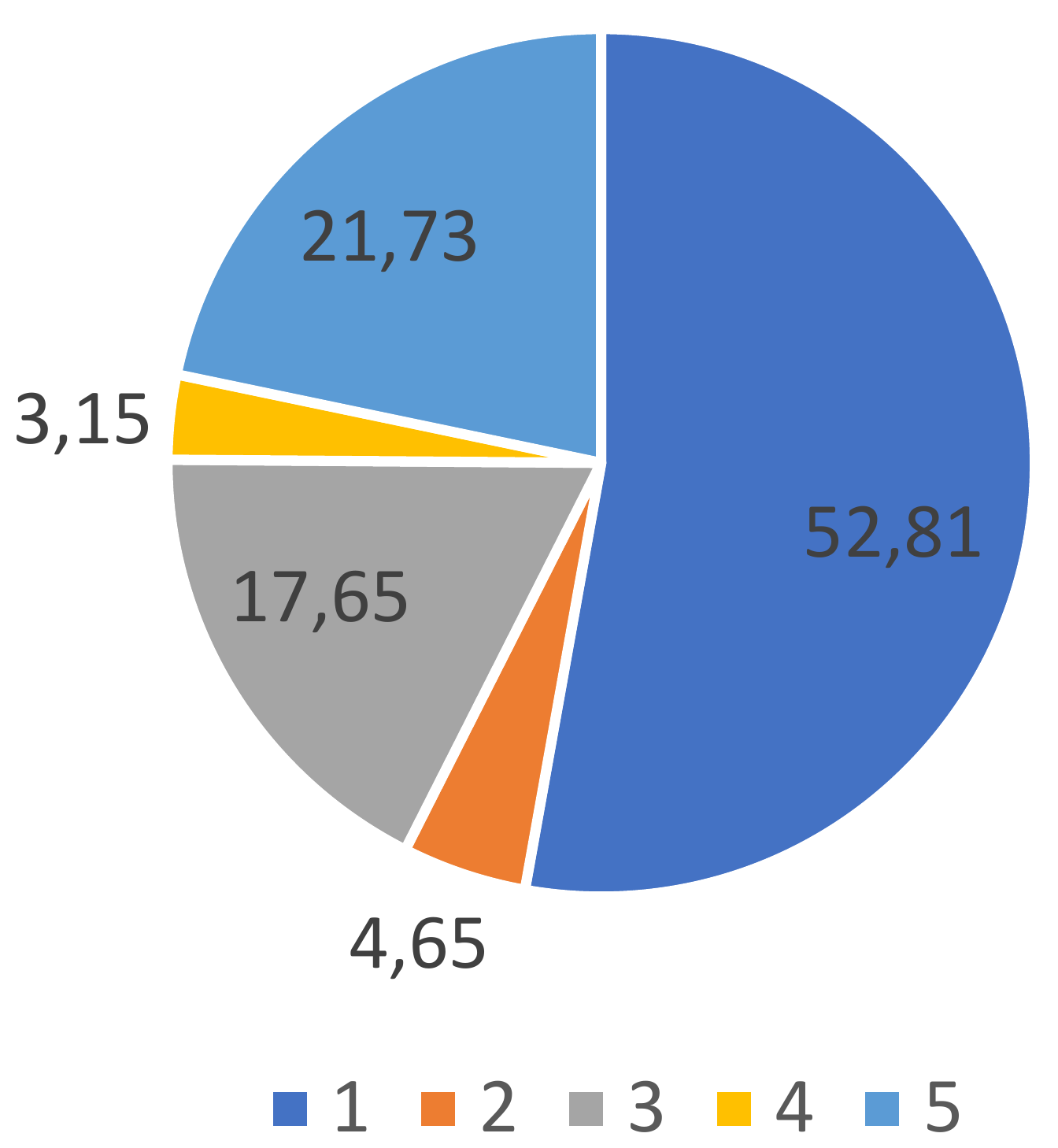} 
  \caption{Load shares on regions in \%, with each region having several data centers (anynomized)}
  \label{fig:metabase}
\end{figure}
Notably, limiting calculations to geographic locations could significantly reduce the problem size.
While an optimal solution can still not been applied to smaller regions (\eg an optimal cost calculation for $200$ processes takes approximately $12$h), the heuristic could find solutions in most regions within few seconds for the 10k and even faster for the 1k transformation configuration.
For bigger regions and 10k transformations, solutions can be found ranging from few minutes up to 30 minutes for the complete problem size (all regions).
Hence, an application of our heuristic partitioned by geographic regions, reduces the round-trip times and makes the interaction with the modeler viable.

\paragraph{Conclusions} (9) model checking semantic correctness is highly desirable to extend the process correctness step in our cost-aware modeling process (cf. \cref{fig:cost_aware_modeling_process}), but could have a significant impact on performance (cf. future work); (10) cost calculations limited to geographic regions reduce modeling round-trip times.

\subsubsection{Relevance --- modeling improvements}
In \cref{tab:optimization_strategies} we assessed common integration process improvements regarding their impact on the reduction of the operation costs of an INTaaS.
These improvements are applied in the \emph{Apply changes} step in the cost-aware modeling process in \cref{fig:cost_aware_modeling_process} and proposed as more cost-efficient process model changes to the user.
While the structural correctness of these changes is guaranteed by definition (cf. \cref{sec:decomposition}), we have yet to assess their relevance in a real-world INTaaS setup.

Hence, similar to \cite{DBLP:conf/debs/0001MFR18} we consider integration processes from SAP CPI 2017 standard content (called \texttt{ds17}), and compare it with scenarios from 2015 (called \texttt{ds15}).
The comparison with a previous content version features a practical study on content evolution.
To analyze the difference between disjoint scenario domains, we grouped the scenarios into the following categories~\cite{Ritter201736}:
On-Premise to Cloud (\texttt{OP2C}),
Cloud to Cloud (\texttt{C2C}), and
Business to Business (\texttt{B2B}).
Since hybrid integration scenarios such as OP2C target the extension or synchronization of business data objects, they are usually less complex.
In contrast native cloud application scenarios such as C2C or B2B mediate between several endpoints, and thus involve more complex integration logic~\cite{Ritter201736}.
The process catalog also contained a small number of simple Device to Cloud scenarios; none of them could be improved by our approach.

Subsequently, we study the impact of optimization strategies OS-1 and OS-2 from \cref{sec:decomposition} regarding their resource reduction potential (\ie relevance).

\labeltitle{Reduce static resource consumption: Process simplification (OS-1)}
Since resources like CPU usage and main memory are part of all major platform cost models, the required runtime software stack and its configuration \emph{statically} contributes to the overall costs.
For example, non-shareable patterns like script or message translator require configurations like library dependencies, instantiated (mapping) programs and bundled schema files that can easily reach a high amount of allocated main memory.
Unnecessary pattern instance configurations that are never used by any process can result in stale, but memory consuming installations.
In a first study, we analyse the effect of process simplifications on the static resource consumption.

\paragraph{Setup} The integration processes are stored as process models in a BPMN-like notation~\cite{ritter2016exception} (similar to \cref{fig:cdd}).
The process models reference data specifications such as schemas (\eg XSD, WSDL), mapping programs, selectors (\eg XPath) and configuration files.
For every pattern used in the process models, runtime statistics are available from benchmarks~\cite{ritter2016benchmarking}.
The data specifications are picked up from aforementioned data sets ds15 and ds17, while the runtime benchmarks are collected as used in SAP CPI.
The mapping and schema information is automatically mined and added to the patterns as contracts, and the rest of the collected data as pattern characteristics.
For each integration process and each optimization strategy, we determine if the strategy applies, and if so, if there is an improvement.
This analysis runs in about two minutes in total for all $902$ scenarios on our workstation.

\paragraph{Results} The relevant metric for the process simplification strategies from OS-1 is the model complexity, \ie the average number of pattern reductions per scenario, shown in \cref{fig:disconnected}.
\begin{figure}
  \centering
  \includegraphics[width=\columnwidth]{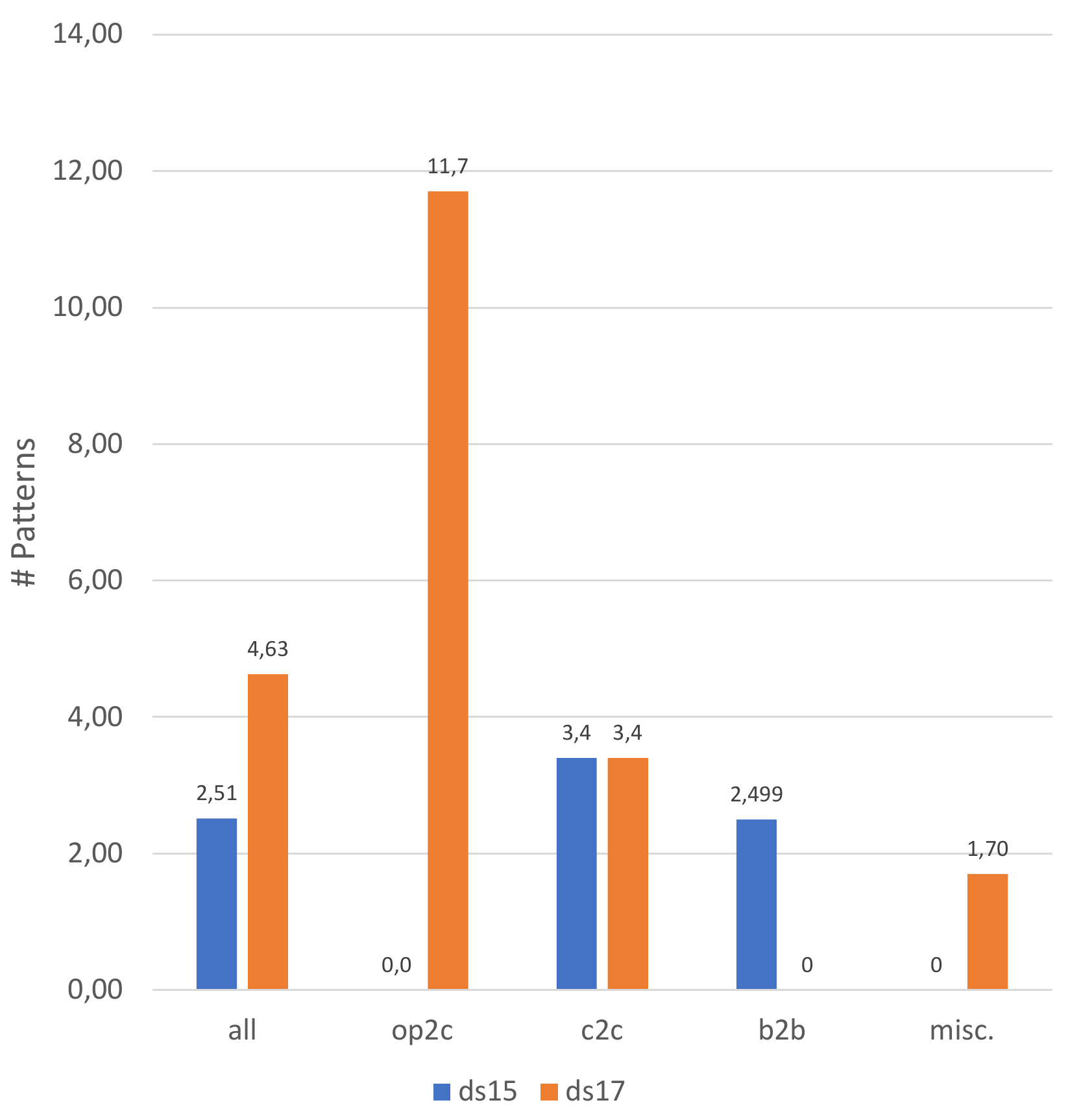} 
  \caption{Pattern removal potential}
  \label{fig:disconnected}
\end{figure}
Although all scenarios were implemented by integration experts, familiar with the modeling notation and the underlying runtime semantics, there is still a small amount of patterns per scenario that could be removed without changing the execution semantics.
On average, the content reduction for the content from 2015 and 2017 was $2.51$ and $4.63$ patterns/IPCG, respectively, with significantly higher numbers in the OP2C domain.

\paragraph{Conclusions} (11) Even simple process simplifications are not always obvious to integration experts in scenarios represented in a control-flow-centric notation (\eg current SAP CPI does not use BPMN Data Objects to visualize the data flow) and the need for process simplification does not seem to diminish as integration experts gain more experience; and (12) the amount of statically allocated main memory could be reduced for most of the categories.

\labeltitle{Reduce workload-specific resources: Data Reduction (OS-2)}
Data reduction is workload-specific and has the potential to impact memory consumption and bandwidth costs.
To evaluate data reduction strategies from OS-2, we leverage the data element information attached to the processes' IPCG contracts and characteristics, and follow their usages along edges in the graph, similar to \enquote{ray tracing} algorithms~\cite{glassner1989introduction}.
We collect the data elements that are used or not used, where possible --- we do not have sufficient design time data to do this for user defined functions or some of the message construction patterns, such as request-reply.
Based on the resulting data element usages, we calculate unused data elements in \cref{fig:savings}.

\paragraph{Setup} same as before.

\paragraph{Results} There is a large amount of unused data elements per scenario for the OP2C scenarios; these are mainly web service communication and message mappings, for which most of the data flow can be reconstructed.
\begin{figure}[bt]
  \centering
  \includegraphics[width=\columnwidth]{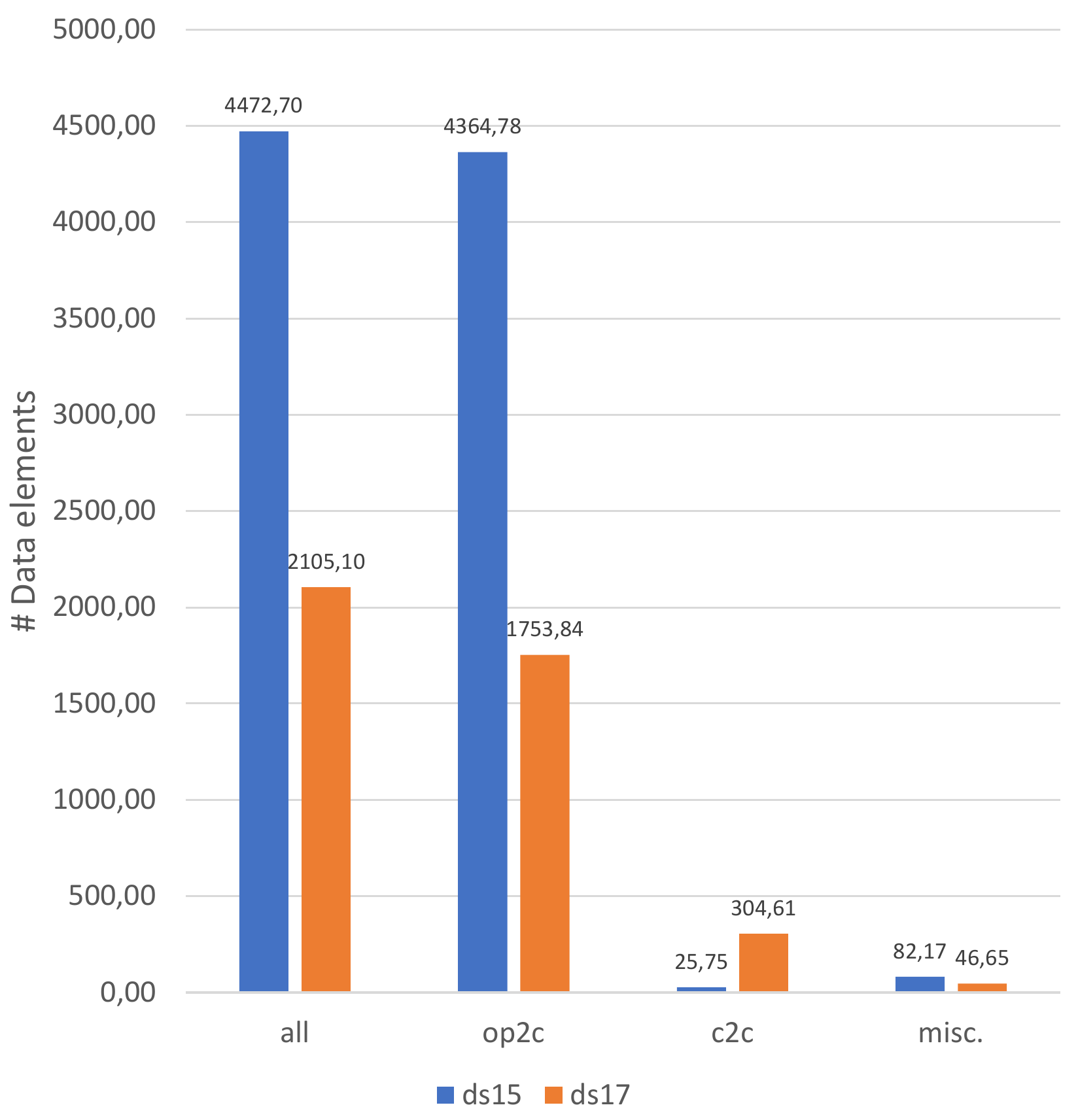} 
  \caption{Data element removal potential}
  \label{fig:savings}
\end{figure}
This is because the predominantly used EDI and SOA interfaces (\eg SAP IDOC, SOAP) for interoperable communication with on-premise applications define a large set of data structures and elements, which are not required by the cloud applications, and vice versa.
In contrast, C2C scenarios are usually more complex, and mostly use user defined functions to transform data, which means that only a limited analysis of the data element usage is possible.
Overall there is an immense memory reduction potential, especially for OP2C scenarios.

\paragraph{Conclusions} (13) Data flows can best be reconstructed when design time data based on interoperability standards is available; and (14) a high number of unused data elements per scenario indicates cost reduction potential.

\subsubsection{Cost-aware modeling --- Italy invoicing}
After having assessed the practical viability of the underlying cost calculation and the relevance of process simplification and data reduction improvements, we study the interaction between INTaaS vendor and process modeler by revisiting our motivating example, while following the cost-aware modeling process in \cref{fig:cost_aware_modeling_process} in three phases.

\paragraph{Phase 1 (user $\rightarrow$ INTaaS): Modeling, decomposition and correctness} A user / modeler implements the Italy invoicing process from \cref{fig:cdd}.
The resulting process model is transformed into an IPCG, on which the INTaaS platform can assess the structural (and semantic) correctness of the process model.
Due to our definitions in \cref{sub:compositions}, the structural correctness could already be enforced during modeling or simply checked thereafter (\eg when the process model is saved and the IPCG is constructed).

During the first phase, the process model can be decomposed using the graph rewriting rules introduced in \cref{sec:decomposition}.
The resulting separated shareable and non-shareable processes would be correct according to our definitions, but the invoicing example does not have patterns with side-effects\footnote{The \emph{store accessor} pattern in SAP CPI accesses a store local to the process instance that is tenant-separated and not configurable, and thus has no side-effects.}.

\paragraph{Phase 2 (INTaaS): Cost calculation and change enumeration} If correct, the formalized integration processes represented as processes are transformed into our CEPP model (cf. \cref{sec:milp}).
Together with the already existing processes in CEPP representation from one or many geographical regions as well as the platform vendor cost models, the costs of placing the processes is calculated, allowing for an answer on a per process level (cf. \cref{ex:invoicing_multicloud}).
\begin{example}
    \label{ex:invoicing_multicloud}
    The invoicing process could be placed in a multicloud setup on an \texttt{AWS t2.small} with costs of approximately $15.94$ EUR/mo\footnote{Considering one tenant without platform vendor preferences and a main memory consumption of $64$MB per pattern.}.
    Already this could improve the costs compared to a tenant-specific, custom hosting approach with approximate costs of 72.53 EUR/mo, indicating that a multicloud placement is beneficial. \qedblack
\end{example}
Notably, the costs measured could include the actual costs from the container variant of the platform vendor with additions from the INTaaS provider (incl. margin, operation and maintenance).

Thereafter, improvements from \cref{sec:decomposition} are applied in the form of an optimizing compiler and costs are calculated, generating several change proposals for the process with their cost reduction potential.
The cost of the current process as well as the change proposals are then shown to the modeler in the use process modeling tool.

\paragraph{Phase 3 (INTaaS $\rightarrow$ user): Groom proposed changes} A change proposal, shown to the modeler, could be the one in \cref{ex:invoicing_reworked}, which is essentially improved using our new process simplification optimizations: \emph{combine patterns}, \emph{router to routing slip}.
That removes two unnecessary process step / pattern instances as well as several more, when smartly making the routing implicit.
This denotes a non-trivial, multicloud cost-reduction potential that a user could not have anticipated easily (\eg the respective MS Azure container would result in higher costs).
At the same time, it significantly reduces the model's complexity, which makes it better understandable.
\begin{example}
    \label{ex:invoicing_reworked}
    The rewriting rules from \cref{sec:decomposition} are applied to the process in \cref{fig:cdd} and results to the cost improved process depicted in \cref{fig:cdd_improved}.
    \begin{figure*}[bt]
    	\centering
    	\includegraphics[width=0.8\linewidth]{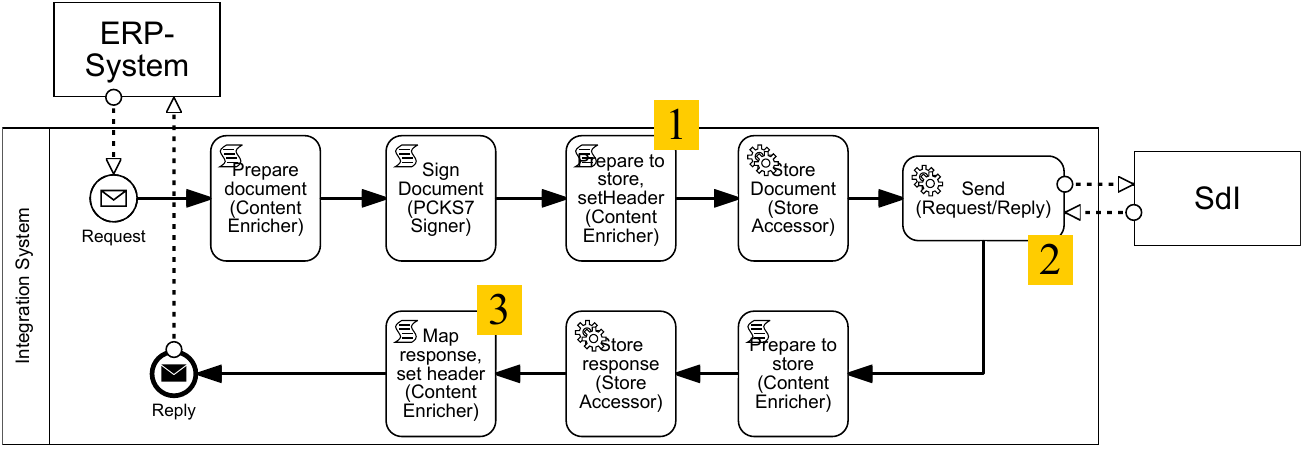}
    	\caption{SAP CPI eDocuments Italy invoicing integration process from \cref{fig:cdd} (more cost-efficient)}
    	\label{fig:cdd_improved}
    \end{figure*}
    First, \textbf{{\colorbox{amber!70}1}} the content enricher instances \texttt{Prepare to store} and \texttt{setHeader} are merged by combining their configurations (\eg expressions).
    This is allowed, since their changes are not overlapping with the \texttt{PCKS7 signer} pattern.
    Similarly, \textbf{{\colorbox{amber!70}3}} the content enricher instances \texttt{map response} and \texttt{setHeader} can be merged.
    In \textbf{{\colorbox{amber!70}2}}, the explicit differentiation of \texttt{Sdl\_Test} and \texttt{Sdl\_Production} in the process, represented by a Content-based \texttt{Router} and two \texttt{Request-Reply} patterns could be moved to an implicit configuration.
    This allows to remove the router as well as the additional \texttt{End} (cf. edge \texttt{Invalid Mode}) and one \texttt{Request-Reply} pattern.
    The routing conditions on \texttt{mode} of the modeled content-based router pattern are mapped to implicit routing configurations as \emph{endpoint URL} in the form of a content enricher, essentially forming a routing slip pattern \cite{hohpe2004enterprise}, which is subsequently merged into the newly created combined \texttt{Map response, set header} step.
    These changes reduce the process by five patterns and make it now fit into an \texttt{AWS t2.micro}, which would reduce the costs to $7.97$ EUR/mo.
    \qedblack
\end{example}

\paragraph{Conclusions} (15) current development of integration processes does not consider their costs (\ie cost-aware modeling support required); (16) a study of the improvements vs. costs trade-off (\ie cost-impact of optimization strategies) is missing.

\subsection{Discussion}

The optimal MILP-based solution performs very well on smaller problem sizes, but  becomes increasingly intractable with a growing problem size (cf. conclusions (3)-(4), (7)).
In contrast, the heuristic performs very well (cf. conclusion (6)) and can calculate a solution for the $170,000$ processes within $30$ minutes, but leading to a trade-off of practical applicability vs. cost-savings, with a solution quality decrease of the heuristic by $\frac{1}{3}$ (cf. conclusion (8)).
Additionally, there is the trade-off of latency vs. costs, \eg when decomposing processes and distributing the sub-processes to different platforms (cf. conclusion (5)).
This trade-off also applies, when several interacting processes are placed to different vendor platforms. 

For cost-aware modeling we found that a heuristic approach combined with geographically scoped cost calculations could make a solution viable in the context of real-world problem sizes (cf. conclusion (10)).
In addition, process change proposals could provide relevant cost reductions (cf. conclusions (12)--(14)), and could help to guide a user during the non-trivial task of cost-aware process modeling (cf. conclusions (11), (15)).
However, the evaluation helped to identify several interesting challenges for further investigations:
\begin{enumerate*}[label=(\alph*)]
    \item While a formal analysis of the execution semantics would be desirable for semantic correctness of a process\footnote{The semantic correctness of a process is crucial for a practical application of our work. 
    However, it essentially requires a formal notion of correctness of the IPCG's underlying execution semantics, which is beyond the scope of this work.
    Briefly, in prior work \cite{DBLP:conf/edoc/0001RMRS18,ritter2019formal,DBLP:phd/at/Ritter2019} we have shown that the execution semantics of IPCGs can be given by a variant of \tdbnets with boundaries, for which the correctness of the graph transformations in \cref{sec:decomposition} would have to be shown (\eg through bi-simulation equivalence of the original vs. changed process).}, they would potentially lead to a viability trade-off with respect to its evaluation times (cf. conclusion (9)).
    \item There could be other ways of scoping cost calculations than geographical regions (\eg platform vendor preferences).
    However, all scoping approaches might eventually have to deal with issues like contradicting cost measurements in different scopes (\eg for a change proposal).
    \item Acceptance of process change proposals might be subject to company policies (\eg auditing, automatic acceptance).
    \item Pricing policies and effects between platform and INTaaS vendors as well as their customers should be further studied (\eg propagation of prices, re-evaluation of modeling decisions, monitoring and alerting).
    \item Besides studies concerning the interaction between user / customer and INTaaS, more extensive studies on the cost-impact of optimization strategies are necessary, \eg improvements vs. costs trade-off (cf. conclusion (16)).
\end{enumerate*}

In summary, cost-aware modeling offers immense cost saving potentials for INTaaS vendors and their customers, but come with further questions and interesting trade-offs that can now be studied, for the first time.

\section{Related Work}
\label{sec:relatedwork}
\begin{table*}[]
\begin{tabular}{p{2cm}|p{2.2cm} p{2.0cm} p{2cm} p{2cm} p{1.5cm} p{2.2cm}}
Literature / Objectives & Automatic multicloud (i) & Correctness (ii) & Cost-efficient (iii) & Perfor-mance (v) & Security (iv) & User perspec-tive (vi) \\
\hline 
\hline
Applications       & \faThumbsOUp & \faThumbsODown & \faThumbsUp & \faThumbsUp & \faThumbsUp & \faThumbsODown \\ 
Service            & \faThumbsUp & \faThumbsODown & \faThumbsUp & \faThumbsUp & \faThumbsODown & \faThumbsODown \\ 
Modeling   & \faThumbsODown & \faThumbsOUp & \faThumbsODown & \faThumbsODown & \faThumbsODown & \faThumbsODown \\ 
Miscellaneous      & \faThumbsUp & \faThumbsODown & \faThumbsODown & \faThumbsODown & \faThumbsOUp & \faThumbsODown \\ 
\hline
CPEE/CAPM               & \faThumbsUp & \faThumbsOUp & \faThumbsUp & \faThumbsUp & \faThumbsUp & \faThumbsUp \\
\hline
\end{tabular}
\begin{tablenotes}
	\centering
	\small
	\item \faThumbsUp: provides solution, \faThumbsOUp: partial solution, \faThumbsODown: no solution
\end{tablenotes}
\caption{Related work in the context of the objectives of cost-aware process modeling (CAPM)}
\label{tab:relatedWork}
\end{table*}
In the literature the CEPP can be seen as a combination of the variable size bin packing (VSBPP)~\cite{correia2008solving,kang2003algorithms,wascher2007improved}, where bins have different capacities and costs with the goal of minimizing the total cost associated with the bins (commonly solved using a solution for the cutting-stock problem \cite{gilmore1961linear} and local search heuristics), and the bin packing with conflicts problem~\cite{jansen1999approximation,sadykov2013bin}.
The only work we found that is slightly similar to our approach is by Epstein et al. \cite{EPSTEIN2011333}.
Although the latter does not consider costs or preferences (\eg security, avoid platform vendors), it essentially proposes a more efficient solution of the First-Fit-Decreasing (FFD) algorithm for the VSBPP~\cite{zhang1997new}, which we modify regarding variable size problems and apply in a certain way to avoid conflicting bin assignments (cf. \cref{sec:heuristic}).
Subsequently, we discuss related work in other domains, set into context to our objectives (i)--(vi) and our CEPP approach in \cref{tab:relatedWork}.
Notice that although some domains seem to cover most of the objectives, this results from the combined literature in the domain.

\labeltitle{Application placement / database placement}
Closest known related work is that on distributed applications by Silva et al. \cite{DBLP:conf/europar/SilvaP17,DBLP:conf/ccgrid/SilvaPD16}, considering performance and cost-efficiency when distributing application components in multi-clouds environments.

In the area of secure application or program placement, \eg SWIFT, JIF by Chong et al.~\cite{chong2007secure} requires manual annotation of confidentiality and integrity labels in the JIF language for an \enquote{automatic} assignment, while we automatically determine shareability labels derived from the pattern specifications.

Our work is inspired by Bollwein et al. \cite{bollwein2017separation} who propose splitting multiple database tables across multicloud hosted databases providers using integer linear programing (\ie splitting data not processes).

\labeltitle{Service placement} A related problem in multicloud systems is service provisioning, which is approached by game theory (\eg~\cite{anuradha2014survey,ardagna2017generalized,passacantando2016service}) with a focus on data throughput, considering connected systems in non-cooperative games for resource allocation, but not vendor costs and security constraints.

Additionally, Legillon et al. \cite{DBLP:conf/cec/LegillonMRT13} add the cost-perspective to identify a cost-efficient service deployment in multiclouds.
For a practical usage, automatic deployment operators are identified (\eg \enquote{scind mutation}, \ie splits one machine into two smaller ones, and \enquote{scale down mutation}, \ie exchanges one machine by a smaller one, would be used in the proposed placement in \cref{ex:motivation}).
Objectives like security are not considered.

\labeltitle{Cloud process modeling} We found early work on modeling workflow interactions in cloud enviroments from Zhou et al. \cite{zhou2011scalable} who study liveness properties based on the Petri net grounded semantic constraint net formalism.
Although the Petri net grounding allows for formal analysis, the main focus is on the synthesis of collaborating workflows, but not considering cloud or multicloud aspects (\eg costs, security) or the user perspective.

In the BPM domain, Wei\ss enberg et al. \cite{weissenberg2014cloud} propose a logistics modeling approach in the cloud with the objectives of cost savings and efficient processes.
However, the work merely mentions the cost saving potentials of the cloud with respect to on-demand or pay-per-use cost models and does not provide a solution in the context of our work.

\labeltitle{Miscellaneous}
Similarly for efficient resource allocation in cloud data centers for allocating virtual machines in multi-tenant setups (\eg~\cite{li2015efficient}), and virtual machine placement (\eg \cite{simarro2011dynamic}).
More closely related is the work on privacy levels for business logic in multicloud deployments~\cite{nacer2017metric}.

\section{Conclusions} 
\label{sec:conclusion}

In this work, we provide a solution for the cost-efficient process placement problem for integration processes in multicloud and studied its applicability to real-world INTaaS integration processes ($\rightarrow Q1$).
We conclude that SaaS / INTaaS vendors can cut costs, when following our approach (conclusions (1)+(2)).
The complexity of the CEPP makes an optimal solution intractable (cf. conclusions (3)+(4), (6)+(7), and leads to inevitable trade-offs (conclusions (5)+(8)).
Still, our local search heuristics allow for more timely calculations in the context of cost-aware modeling (conclusion (10)), making our contributions applicable to real-world multicloud INTaaS ($\rightarrow Q2$), which we studied up to a problem size of 170,000 processes and more than 30,000 tenants.
The cost-aware modeling process could help an INTaaS' customer to reduce costs ($\rightarrow Q3$) and simplify their integration process models and configurations (conclusions (11)--(15)), while assuring their structural correctness ($\rightarrow Q2$).

Future work will be conducted especially in the context of semantic correctness and the identified correctness vs. viability trade-off (conclusion (9)), for the practical applicability vs. cost-savings (\eg considering improvements of the heuristic like better initial solutions, more sophisticated transformations) and latency vs. costs (\eg through extension of our model by communication latencies for geographically distributed containers).
The interaction between the user and INTaaS perspectives as well as the interplay between process cost improvements through change proposals and overall cost heuristics denote further interesting fields of study (conclusion (16)).

\noindent\textbf{Acknowledgements} We thank Jasmin Greissl for various valuable discussions in the context of this article and Dr. Fredrik Nordvall Forsberg for proofreading. 

\bibliographystyle{elsarticle-num}
\bibliography{partitioning}

\end{document}